\pgfplotsset{compat=1.18} 
\tikzset{>=latex}
\newcommand{\G}{\mathsf{G}}
\renewcommand{\Re}{\ensuremath{\mathbb{R}}}
\newcommand{\U}{\ensuremath{\mathsf{U}}}
\newcommand{\SO}[1]{\ensuremath{\mathsf{SO(#1)}}}
\newcommand{\T}{\ensuremath{\mathsf{T}}}
\renewcommand{\L}{\ensuremath{\mathsf{L}}}
\newcommand{\R}{\ensuremath{\mathsf{R}}}
\newcommand{\Ad}{\ensuremath{\mathrm{Ad}}}
\newcommand{\ad}{\ensuremath{\mathrm{ad}}}
\newcommand{\g}{\ensuremath{\mathfrak{g}}}
\newcommand{\SEtt}{\mathsf{SE}_2(3)}
\newcommand{\sett}{\mathfrak{se}_2(3)}
\newcommand{\aSEtt}[3]{\begin{bmatrix} #1 & #2 & #3 \\ 0_{1\times 3} & 1 & 0 \\ 0_{1\times 3} & 0 & 1 \end{bmatrix}}
\newcommand{\asett}[3]{\begin{bmatrix} #1 & #2 & #3 \\ 0_{1\times 3} & 0 & 0 \\ 0_{1\times 3} & 0 & 0 \end{bmatrix}}
\newcommand{\pfimply}[2]{{(#1)$\,\Rightarrow\,$(#2)}}
\newcommand{\pfequiv}[2]{{(#1)$\,\Leftrightarrow\,$(#2)}}
\newcommand{\grav}{\mathsf{g}}
\newcommand{\notLeftrightarrow}{\mathrel{\ooalign{$\Leftrightarrow$\cr\hidewidth$/$\hidewidth}}}
\newtheorem{definition}{Definition}
\newtheorem{theorem}{Theorem}
\newtheorem{lemma}{Lemma}
\newtheorem{remark}{Remark}
\newtheorem{assumption}{Assumption}
\newtheorem{corollary}{Corollary}
\newtheorem{example}{Example}
\title{Invariant Kalman Filter for Relative Dynamics}
\author{Tejaswi K. C., Maneesha Wickramasuriya, Silv\`ere Bonnabel, Axel Barrau, and Taeyoung Lee
	\thanks{Tejaswi K. C., Maneesha Wickramasuriya, and Taeyoung Lee, Mechanical and Aerospace Engineering, George Washington University, Washington, DC 20052, {\tt \{kctejaswi999, maneesh, tylee\}@gwu.edu}}%
	\thanks{Axel Barrau, OFFROAD}%
	\thanks{Silv\`ere Bonnabel, Mines Paris, PSL University, Centre for robotics (CAOR), 75006 Paris, France }%
	\thanks{\textsuperscript{\footnotesize\ensuremath{*}}This research has been supported in part by AFOSR MURI FA9550-23-1-0400, and ONR N00014-23-1-2850.}
}
\begin{document}
\allowdisplaybreaks
\maketitle \thispagestyle{empty} \pagestyle{empty}

\begin{abstract}
This paper develops a geometric framework for invariant filtering of relative dynamics on Lie groups.  
We first revisit the notion of state trajectory independence, under which the estimation error evolves autonomously, and derive new equivalent conditions by decomposing the system vector field into left-invariant, intrinsic, and right-invariant components.  
Building on this result, we introduce the concept of relative trajectory independence to characterize when the relative motion between two dynamical systems is autonomous.  
A key theoretical finding is that relative trajectory independence automatically ensures state trajectory independence for the corresponding estimation error.  
This connection provides the foundation for constructing invariant filters that preserve the Lie group structure, maintain exact linearization of the error dynamics, and enable consistent covariance propagation.  
These are illustrated with numerical examples. 
\end{abstract}

\section{Introduction}

Although geometric methods have long been used for control~\cite{jurdjevic1972control}, their use for state estimation and observer design is relatively more recent~\cite{aghannan2002invariant,mahony2008nonlinear,bonnabel2008symmetry}.  
Since then, various methods have emerged, all revolving around the idea of defining an error variable through a Lie group operation.  
In particular, the invariant extended Kalman filter (IEKF)~\cite{barrau2016invariant} is a variation of the Kalman filter that seeks an appropriate group structure for the state space and leverages the Lie group error and exponential state update for estimation.  
When applied to certain classes of systems, it recovers several properties of the linear Kalman filter.  
Namely, it possesses convergence guarantees as an observer~\cite{barrau2016invariant,wang2020hybrid}, maintains consistency even in partially unobservable scenarios such as simultaneous localization and mapping~\cite{brossard2018exploiting,zhang2017convergence,heo2018consistent}, handles intrinsic constraints naturally~\cite{barrau2019extended,hartley2020iekf}, and has inspired various extensions~\cite{van2020invariant}.  
It has also led to high-end industrial products in inertial navigation and successful robotic applications~\cite{hartley2020iekf}.  

It has been shown that, for certain classes of dynamical systems, the estimation error—defined as the action of the true state’s inverse on the estimated state—evolves autonomously, implying that the error dynamics do not depend on the estimated state.  
This property is referred to as \emph{state trajectory independence (STI)} of the estimation error.  
This feature is particularly advantageous in extended Kalman filters, where linearized error dynamics are used to propagate the covariance of Gaussian distributions.  
In general, the error dynamics depend on the estimated state, and as such, the reliability of the propagated covariance can vary significantly with estimation accuracy.  
However, when applied to systems that exhibit STI, invariant Kalman filters achieve more consistent performance.  
They demonstrate robustness, as the accuracy of the linearization remains unaffected by the estimated state.  
Moreover, a linear differential equation underlies any nonlinear STI error equation—a property known as \emph{error log-linearity}~\cite{barrau2016invariant}—which plays a key role in proving IEKF convergence.  

This paper focuses on invariant Kalman filters for estimating the \emph{relative state} between two dynamical systems.
We consider a pair of systems that evolve under identical equations of motion but are driven by different control inputs.
Such problems arise in a wide range of engineering applications, including relative pose estimation~\cite{zhou2008robot}, spacecraft formation control~\cite{WuFleAJGCD17}, and autonomous rendezvous and docking~\cite{kelsey2006vision}.
The present study is particularly motivated by ongoing research on the autonomous flight of an unmanned aerial vehicle relative to a shipborne flight deck in a maritime environment~\cite{GamLeeCEP24,WicLeeSnyAJGCD25}, where accurate estimation of the vehicle’s pose relative to the flight deck is essential.

Owing to distinct control inputs, the relative discrepancy between two systems—expressed in terms of a group-invariant error—may not benefit from the cancellations that make single-system invariant errors trajectory-independent.  
However, in this paper, we show that an analogous independence property can be established for the relative estimation error between two dynamical systems.  
Specifically, while previous analyses of STI have been limited to single-system dynamics, the relative dynamics introduce coupling terms arising from distinct controls, making the autonomy of the relative estimation error nontrivial to characterize.  
To address this, we derive a new set of \emph{equivalent conditions} for STI by uniquely decomposing the system vector field into left-invariant, intrinsic, and right-invariant components.  
This decomposition provides a complete and geometrically transparent characterization of when the estimation error evolves independently of the underlying trajectory.

Building on this foundation, we introduce the concept of \emph{relative trajectory independence (RTI)}, which formalizes the autonomy of the relative dynamics between two systems.  
A key theoretical discovery of this work is that RTI \emph{automatically implies} STI for the corresponding relative estimation error.  
This result bridges the gap between single-system and relative-system formulations, showing that once the relative dynamics are trajectory-independent, the estimation error inherits this property without additional assumptions.  
This insight forms the theoretical cornerstone for the systematic construction of invariant filters for relative dynamics—an advancement not previously available in the literature.

Using these results, we develop left- and right-invariant relative Kalman filters that preserve the Lie group structure of the state space and maintain exact linearization of the error dynamics.  
These filters achieve consistent covariance propagation, robustness to large estimation errors, and symmetry-preserving performance across heterogeneous sensor configurations.  
The framework provides a unified geometric foundation for relative-state estimation applicable to multi-agent navigation, cooperative localization, and formation control.

The proposed theory is validated through numerical simulations using a vehicle kinematic model, where the proposed invariant filters are compared against a conventional quaternion-based extended Kalman filter (QEKF).  
The results demonstrate that the invariant filters consistently outperform the QEKF under various process and measurement noise conditions.

In short, the main contributions of this paper lie in the development of the theoretical foundation of invariant filters for relative dynamics.
The specific contributions include
\begin{enumerate}
    \item Derivation of new \emph{equivalent conditions} for state trajectory independence through a unique decomposition of the system vector field. 
    \item Introduction of the concept of \emph{relative trajectory independence} (RTI) and proof that RTI \emph{implies} STI for the relative estimation error.
    \item Construction of left- and right-invariant relative Kalman filters enabled by these theoretical results. 
    \item Numerical validation demonstrating that invariant filters achieve superior performance when their invariance type matches that of the measurement model.
\end{enumerate}

The remainder of this paper is organized as follows.  
\Cref{sec:STIRM} presents the theoretical development of state trajectory independence and its equivalent conditions.  
\Cref{sec:IKFRM} applies these results to construct invariant filters for relative dynamics, followed by numerical simulations in \Cref{sec:Num}, and concluding remarks in \Cref{sec:CL}.

\section{Invariant Kalman Filter}\label{sec:PF}

\subsection{Lie Group}

We first summarize the basic definitions and properties of a Lie group~\cite{marsden2013introduction}.  
A \textit{Lie group} $\G$ is a differentiable manifold endowed with a group structure such that the group operation is smooth.  
Throughout this paper, we consider $\G$ to be a matrix Lie group, i.e., $\G \subset \mathrm{GL}(n)$.  
The \textit{Lie algebra} $\g$ is defined as the tangent space of $\G$ at the identity element $e \in \G$, equipped with a \textit{Lie bracket} $[\cdot, \cdot] : \g \times \g \to \g$ that is bilinear, skew-symmetric, and satisfies the Jacobi identity.

For $g, h \in \G$, the \textit{left translation map} $\L_h : \G \to \G$ is defined by $\L_h(g) = h g$, and the \textit{right translation map} $\R_h : \G \to \G$ is defined by $\R_h(g) = g h$.  
Given $\xi \in \g$, define a vector field $X_\xi : \G \to \T \G$ by $X_\xi(g) = \T_e \L_g \cdot \xi$, and let $\gamma_\xi(t)$ denote the corresponding integral curve passing through the identity $e$ at $t = 0$.  
The \textit{exponential map} $\exp : \g \to \G$ is defined by $\exp(\xi) = \gamma_\xi(1)$.  
The map $\exp$ is a local diffeomorphism from a neighborhood of the origin in $\g$ onto a neighborhood of $e$ in $\G$.

An \textit{automorphism} is a map $\phi : \G \to \G$ such that $\phi(gh) = \phi(g)\phi(h)$.  
The \textit{inner automorphism} $\mathsf{I}_g : \G \to \G$ is defined by $\mathsf{I}_g(h) = g h g^{-1}$.  
The \textit{adjoint operator} $\Ad_g : \g \to \g$ is the differential of $\mathsf{I}_g$ at $h = e$ applied to $\eta \in \g$, i.e., $\Ad_g \eta = \T_e \mathsf{I}_g \cdot \eta$.  
The corresponding infinitesimal operator, denoted $\ad_\xi : \g \to \g$, is obtained by differentiating $\Ad_g \eta$ with respect to $g$ at $e$ in the direction $\xi$, that is,
\[
    \ad_\xi \eta = \T_e (\Ad_g \eta) \cdot \xi.
\]
This operator coincides with the Lie bracket: $\ad_\xi \eta = [\xi, \eta]$.

Let the dimension of $\G$ (and $\g$) be $n$.  
Since $\g$ is a vector space, we may choose a basis $(E_1, E_2, \ldots, E_n)$ for $\g$.  
Then, any $\xi \in \g$ can be expressed as $\xi = \sum_i \xi_i E_i$, where $\xi_i \in \Re$.  
This induces the \textit{vee map} $\vee : \g \to \Re^n$ defined by $\xi^\vee = (\xi_1, \xi_2, \ldots, \xi_n)$, and its inverse, the \textit{hat map} $\wedge : \Re^n \to \g$, given by $(\xi_1, \xi_2, \ldots, \xi_n)^\wedge = \xi$.

For the three-dimensional special orthogonal group $\SO3= \{ R \in \Re^{3 \times 3} \mid R^\top R = I_{3 \times 3}, \ \det[R] = 1 \}$, the hat map is defined so that $\hat{x} y = x \times y$ and $\hat{x}^\top = -\hat{x}$ for any $x, y \in \Re^3$.  
This defines a Lie algebra isomorphism since $\ad_{\hat{x}} \hat{y} = (x \times y)^\wedge$.  
The same notation for the hat and vee maps will be used for other Lie groups as well.

\subsection{State Trajectory Independence of Estimation Error (ETI)}

In this section, we revisit the development of invariant filtering in~\cite{barrau2016invariant} and present an alternative formulation that is particularly useful for relative dynamics. 
Consider a dynamical system on a Lie group $\G$, governed by the differential equation
\begin{align}
    \dot g = X(g, u), \label{eqn:dot_g_G}
\end{align}
where $u$ is a time-varying exogenous signal (control input) taking values in the input space $\U$, and $X : \G \times \U \rightarrow \T \G$ is a vector field dependent on the control.  
It is assumed that the existence and uniqueness of the solution of \eqref{eqn:dot_g_G} are guaranteed. 

The tangent bundle $\T \G$ may be trivialized either by left or right translation using the identification $\T \G \simeq \G \times \g$.  
Specifically, \eqref{eqn:dot_g_G} can be rewritten as
\begin{align}
    \dot g = g \eta(g,u), \quad \dot g = \xi(g, u) g, \label{eqn:dot_g}
\end{align}
where $\eta, \xi : \G \times \U \rightarrow \g$.  
Here, $g \eta$ denotes $\T_e \L_g \cdot \eta \in \T_g \G$, where $\T_e \L_g : \g \simeq \T_e \G \to \T_g \G$ is the tangent map of the left translation $\L_g$.  
For matrix Lie groups, $\T_e \L_g$ corresponds to left multiplication by $g$, making the notation $g\eta$ natural.  
Similarly, $\xi g$ corresponds to the right translation $\R_g$.  
In the first equation of \eqref{eqn:dot_g}, the transformation from $\dot g$ to $\eta$ is called the \textit{left trivialization}, and its inverse is the \textit{left translation}.  
The \textit{right trivialization} and \textit{right translation} are defined analogously.  
The left and right trivializations are related by $\xi = \Ad_g \eta$.  
In subsequent developments, we may drop the explicit dependence on $g$ or $u$ for brevity. 

Next, suppose there exists another trajectory $\bar g$ on $\G$, which evolves according to \eqref{eqn:dot_g} under the same control $u$, but potentially from a different initial condition.  
It may represent, for instance, the evolution of the estimated state of an observer:
\begin{align}
    \dot{\bar g} = X(\bar g, u). \label{eqn:bar_dot_g}
\end{align}

Define the \textit{left-invariant error} $f \in \G$ and the \textit{right-invariant error} $h \in \G$ between $g$ and $\bar g$ as
\begin{align}
    f = g^{-1} \bar g, \quad h = \bar g g^{-1}. \label{eqn:f}
\end{align}
Left invariance implies that if both $g$ and $\bar g$ are left-multiplied by another group element $l \in \G$, the error remains unchanged, i.e., $(l g)^{-1} (l \bar g) = g^{-1} \bar g = f$.  
Right invariance is defined analogously. 

One of the main contributions of~\cite{barrau2016invariant} is the formulation of the notion of \textit{state trajectory independence}, meaning that the dynamics of the error $f$ (or $h$) do not depend explicitly on $g$ or $\bar g$. 

\begin{definition}{(ETI)}\label{def:sti}
    Consider the trajectories $g$ of \eqref{eqn:dot_g} and $\bar g$ of \eqref{eqn:bar_dot_g} driven by the same control $u$.  
    Let the error state between $g$ and $\bar g$ be defined as in \eqref{eqn:f}.  
    We say that ``the \underline{error state} of \eqref{eqn:dot_g} is state trajectory independent,'' or equivalently, that ``\eqref{eqn:dot_g} is ETI,'' if the error is governed by a differential equation that depends only on the error state itself and the control. 
\end{definition}

This property is particularly useful for implementing extended Kalman filters, where uncertainties are propagated using Gaussian distributions over linearized dynamics.  
Since the linearization of the estimation error does not depend on the estimated state, the accuracy of uncertainty propagation is not degraded by poor initial estimates.

While a mathematical condition on $X$ equivalent to ETI was presented in~\cite{barrau2016invariant}, we introduce an alternative but equivalent condition that provides new insight and interpretation.  
Before doing so, we show that \eqref{eqn:dot_g_G} can be rearranged into equivalent forms as follows.

\begin{lemma}\label{lem:decomp}
    Consider \eqref{eqn:dot_g_G}, rewritten as
    \begin{align}
        \dot g = X_t(g). \label{eqn:dot_g_Xt}
    \end{align}
    Without loss of generality, it can be decomposed as
    \begin{align}
        \dot g = \xi(t) g + \tilde X_t(g) + g \zeta(t), \label{eqn:dot_g_Xtilde}
    \end{align}
    where $\tilde X_t : \Re \times \G \to \T \G$ is a time-dependent vector field satisfying
    \begin{align}
        \tilde X_t(e) = 0, \label{eqn:tilde_X_e}
    \end{align}
    and $\xi, \zeta : \Re \to \g$ are trajectories in $\g$ such that
    \begin{align}
        X_t(e) = \xi(t) + \zeta(t). \label{eqn:Xte}
    \end{align}
\end{lemma}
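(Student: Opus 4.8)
The plan is to prove the lemma \emph{constructively}, by exhibiting an explicit decomposition and then checking that it meets the two stated requirements \eqref{eqn:tilde_X_e} and \eqref{eqn:Xte}. The essential observation is that the right-invariant field $g \mapsto \xi(t) g = \T_e\R_g\cdot \xi(t)$ and the left-invariant field $g\mapsto g\zeta(t) = \T_e\L_g\cdot\zeta(t)$ both collapse to their generators when evaluated at the identity, since $\T_e\R_e$ and $\T_e\L_e$ are the identity map on $\g$. Hence at $g=e$ the sum of these two terms equals $\xi(t)+\zeta(t)$, which is exactly the value prescribed for $X_t(e)$ in \eqref{eqn:Xte}.

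First I would fix, for each $t$, any pair $\xi(t),\zeta(t)\in\g$ satisfying the sum constraint $\xi(t)+\zeta(t)=X_t(e)$; this is possible because $X_t(e)\in\T_e\G\simeq\g$, so the splitting lives entirely in $\g$ (for instance one may take $\xi(t)=X_t(e)$, $\zeta(t)=0$). The evident non-uniqueness of this split is precisely what justifies the phrase ``without loss of generality'' in the statement. Having fixed such a pair, I would define the remaining field by subtraction,
\begin{align}
    \tilde X_t(g) := X_t(g) - \xi(t) g - g\zeta(t). \label{eqn:plan_Xtilde}
\end{align}
Rearranging \eqref{eqn:plan_Xtilde} reproduces \eqref{eqn:dot_g_Xtilde} identically, and evaluating at $g=e$ gives $\tilde X_t(e) = X_t(e) - \xi(t) - \zeta(t) = 0$ by the chosen constraint, which establishes \eqref{eqn:tilde_X_e}.

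The one point requiring genuine (though routine) care, and which I expect to be the only real obstacle, is verifying that $\tilde X_t$ defined by \eqref{eqn:plan_Xtilde} is a \emph{bona fide} vector field on $\G$ rather than a merely formal matrix expression; that is, one must confirm $\tilde X_t(g)\in\T_g\G$ for every $g$. This follows because each of the three terms on the right of \eqref{eqn:plan_Xtilde} already lies in $\T_g\G$: the term $X_t(g)$ by the hypothesis that $X_t$ is a vector field, and the terms $\xi(t) g = \T_e\R_g\cdot\xi(t)$ and $g\zeta(t)=\T_e\L_g\cdot\zeta(t)$ by construction of the invariant fields. Since $\T_g\G$ is a linear space, the difference remains in $\T_g\G$. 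Finally, smoothness of $\tilde X_t$ in both $t$ and $g$ is inherited from the smoothness of $X_t$ and of the translation maps, provided $\xi$ and $\zeta$ are chosen smoothly, which completes the argument.
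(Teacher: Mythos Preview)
Your proposal is correct and follows essentially the same constructive approach as the paper: define $\tilde X_t(g) := X_t(g) - \xi(t) g - g\zeta(t)$ for any split $\xi+\zeta = X_t(e)$, then evaluate at $g=e$. Your additional verification that $\tilde X_t(g)\in\T_g\G$ and the smoothness remark are routine points the paper omits, but the core argument is identical.
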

\begin{proof}
    Let $\tilde X_t(g) = X_t(g) - \xi g - g \zeta$.  
    Then $\tilde X_t(e) = X_t(e) - \xi - \zeta$, which vanishes by \eqref{eqn:Xte}. 
\end{proof}

This lemma states that any time-dependent vector field $X_t(g)$ on $\G$ can be decomposed as in \eqref{eqn:dot_g_Xtilde}, where the middle term vanishes when $g = e$.  
An interesting feature of \eqref{eqn:dot_g_Xtilde} is that $g$ appears only through left and right translations in the first and third terms, meaning the paths $\xi$ and $\zeta$ in $\g$ do not directly depend on $g$.  
The decomposition is not unique, since $\tilde X_t$, $\xi$, and $\zeta$ depend on how $X_t(e)$ is partitioned in \eqref{eqn:Xte}.  
For instance, one may choose $X_t(e) = \zeta(t)$ and $\xi(t) = 0$, or vice versa. 

\begin{example}\label{ex:UAV1}
    Consider a vehicle kinematic model with an onboard IMU, neglecting noise and bias.  
    Let $\mathcal{E}$ be the inertial frame whose third axis points downward along gravity, and let $\mathcal{B}$ be the body frame attached to the vehicle.  
    Define:
    \begin{itemize}
        \item $x \in \Re^3$: vehicle position relative to $\mathcal{E}$, resolved in $\mathcal{E}$.
        \item $v = \dot x \in \Re^3$: vehicle velocity relative to $\mathcal{E}$, resolved in $\mathcal{E}$.
        \item $R \in \SO3$: attitude of the vehicle, corresponding to the linear transformation of a representation of a vector from $\mathcal{B}$ to $\mathcal{E}$.
        \item $\Omega \in \Re^3$: angular velocity of the vehicle resolved in $\mathcal{B}$.
        \item $a = R^\top(\dot v - \grav e_3) \in \Re^3$: acceleration measured by the IMU, resolved in $\mathcal{B}$, where $e_3 = [0, 0, 1]^T$.
        \item $\grav \in \Re$: gravitational acceleration.
    \end{itemize}
    Assuming $\Omega$ and $a$ are measured by the IMU, the equations of motion are
    \begin{align*}
        \dot x & = v,\\
        \dot v & = R a + \grav e_3,\\
        \dot R & = R \hat\Omega.
    \end{align*}

    In~\cite{barrau2016invariant}, the states are embedded into the matrix Lie group $\SEtt$ (see Appendix~\ref{sec:SEtt}), such that the equations can be written as
    \begin{align}
        \dot g = g \eta,
    \end{align}
    where $g \in \SEtt$ and $\eta \in \sett$ are given by
    \begin{align*}
        g & = \aSEtt{R}{v}{x},\\
        \eta(g,u) & = \asett{\hat\Omega}{a + R^\top \grav e_3}{R^\top v},
    \end{align*}
    with $u = (\Omega, a)$.  
    For brevity, denote $g = (R, v, x)$ and $\eta(g,u) = (\hat\Omega, a + R^\top \grav e_3, R^\top v)$.  
    From \eqref{eqn:Xte}, we have $X_t(e) = \eta(e,u) = (\hat\Omega, a + \grav e_3, 0)$.

    According to \Cref{lem:decomp}, this can be decomposed as
    \begin{align}
        \xi(t) = (0, \grav e_3, 0), \quad \zeta(t) = (\hat\Omega, a, 0), \label{eqn:xi_zeta_IMU}
    \end{align}
    so that the equations of motion take the form of \eqref{eqn:dot_g_Xtilde} with
    \begin{align}
        \tilde X(g) = (0, 0, v). \label{eqn:tilde_X_3_IMU}
    \end{align}
    This decomposition is physically natural: the gravitational acceleration $\grav e_3$ is resolved in $\mathcal{E}$ and right-translated by $g$, while $\Omega$ and $a$ are resolved in $\mathcal{B}$ and left-translated by $g$ to be expressed in $\mathcal{E}$.  
    The remaining vector field $\tilde X(g)$ in \eqref{eqn:tilde_X_3_IMU} no longer depends on $t$. 
\end{example}
This example will be used repeatedly throughout this paper, to illustrate the presented theoretical results more clearly. 

The equivalent condition for the state trajectory independence of the error state in \eqref{eqn:dot_g_Xt} was presented in~\cite{barrau2016invariant}.  
Here, we develop a similar result for the alternative formulation given by \eqref{eqn:dot_g_Xtilde}.

\begin{theorem}{(ETI)}\label{thm:groupaff}
    Consider the dynamics on $\G$ given by \eqref{eqn:dot_g_Xt} or \eqref{eqn:dot_g_Xtilde}.
    The following statements are equivalent. 

    \renewcommand{\labelenumi}{\textit{(\roman{enumi})}}
    \begin{enumerate} 
        \item The left-invariant error and the right-invariant error are state trajectory independent (ETI).
        \item For any $g , h \in \G $ and $t\in\Re$,  the vector field $X_t$ of \eqref{eqn:dot_g_Xt} satisfies
            \begin{align}
                X_t(hg) = h X_t(g) + X_t(h)g - h X_t(e)g. \label{eqn:group_aff}
            \end{align}
        \item For any $g , h \in \G $ and $t\in\Re$,  the vector field $\tilde X_t$ of \eqref{eqn:dot_g_Xtilde} satisfies
            \begin{align}
                \tilde X_t(hg)=h\tilde X_t(g)+\tilde X_t(h)g. \label{eqn:group_aff_tilde}
            \end{align}
    \end{enumerate}
\end{theorem}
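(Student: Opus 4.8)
The plan is to prove the cycle of equivalences by direct differentiation of the two error variables for \pfequiv{i}{ii}, together with a single algebraic substitution of \Cref{lem:decomp} for \pfequiv{ii}{iii}.

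I would start with \pfimply{ii}{i}. Differentiating the left-invariant error $f = g^{-1}\bar g$, using $\tfrac{d}{dt}(g^{-1}) = -g^{-1}\dot g\,g^{-1}$ together with $\dot g = X_t(g)$ and $\dot{\bar g} = X_t(\bar g)$, and substituting $\bar g = gf$, gives
\begin{align*}
\dot f = -g^{-1} X_t(g) f + g^{-1} X_t(gf).
\end{align*}
Applying the group-affine identity \eqref{eqn:group_aff} to $X_t(gf)$ (with its two arguments set to $g$ and $f$) cancels the two $g^{-1}X_t(g)f$ contributions and collapses this to $\dot f = X_t(f) - X_t(e)f$, which depends only on $f$ and $t$. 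The analogous computation for $h = \bar g g^{-1}$, substituting $\bar g = hg$ and applying \eqref{eqn:group_aff} directly to $X_t(hg)$, yields $\dot h = X_t(h) - h X_t(e)$. Both error laws are thus autonomous in their error variable, which is exactly ETI.

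For the converse \pfimply{i}{ii}, I would use that ETI forces $\dot f$ to be a function of $f$ and $t$ alone. By existence and uniqueness, at each fixed time $t$ the configuration $(g,\bar g)$—equivalently, $g$ with $f = g^{-1}\bar g$ held fixed—can be prescribed arbitrarily. Evaluating the expression above at $g = e$ identifies the trajectory-independent value as $X_t(f) - X_t(e)f$; demanding that $\dot f$ equal this for every $g$ and multiplying through on the left by $g$ reproduces \eqref{eqn:group_aff} after the relabeling $g \mapsto h$, $f \mapsto g$. Working instead from the right-invariant error gives the same identity, so either half of statement (i) already suffices. The main subtlety is precisely this quantifier step: I must justify that a relation holding along every trajectory forces the pointwise identity for all group elements simultaneously, which is exactly where the freedom to assign arbitrary initial data at time $t$ enters.

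Finally, \pfequiv{ii}{iii} is a routine substitution. Inserting the decomposition $X_t(g) = \xi(t) g + \tilde X_t(g) + g\zeta(t)$ together with $X_t(e) = \xi(t) + \zeta(t)$ into both sides of \eqref{eqn:group_aff} and expanding, the pure translation terms $\xi(t)hg$ and $hg\zeta(t)$ match on each side and cancel, while the correction $-hX_t(e)g$ on the right of \eqref{eqn:group_aff} is exactly what absorbs the cross terms $h\xi(t)g$ and $h\zeta(t)g$; what remains is \eqref{eqn:group_aff_tilde}. Since every step is reversible, the two conditions are equivalent, and I expect no difficulty here beyond careful bookkeeping of the product terms.
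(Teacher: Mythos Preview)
Your proposal is correct. The equivalence \pfequiv{ii}{iii} is handled by the same substitution argument the paper uses: insert $X_t(g) = \xi g + \tilde X_t(g) + g\zeta$ and $X_t(e) = \xi + \zeta$ into \eqref{eqn:group_aff}, then watch the translation terms cancel. The paper presents the two directions separately (\pfimply{iii}{ii} then \pfimply{ii}{iii}), but the content is identical to your single reversible computation.

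Where you differ is in \pfequiv{i}{ii}. The paper simply cites \cite{barrau2016invariant} for this step and moves on; you instead give the argument in full, differentiating $f$ and $h$ directly and then invoking the freedom to prescribe $(g,\bar g)$ arbitrarily at a fixed time to promote the trajectory identity to a pointwise one. That is precisely the original Barrau--Bonnabel argument, so you are reproducing the cited result rather than offering an alternative route. Your side observation that either half of (i) alone already forces \eqref{eqn:group_aff} is correct and slightly sharper than the statement as written, which bundles the left- and right-invariant cases together. The quantifier step you flag as the ``main subtlety'' is exactly the right place to be careful, and your justification via arbitrary initial data at time $t$ is the standard and sufficient one.
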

\begin{proof}
    It has been shown that \pfequiv{i}{ii} in \cite{barrau2016invariant}.
    Next, suppose (iii) holds.
    By equating \eqref{eqn:dot_g_Xt} and \eqref{eqn:dot_g_Xtilde}, 
    \begin{align}
        X_t(g) = \tilde X_t(g) + \xi g + g \zeta.\label{eqn:X_tildeX}
    \end{align}
    Setting $g=hg$ in \eqref{eqn:X_tildeX}, and substituting \eqref{eqn:group_aff_tilde}, we have
    \begin{align*}
        X_t(hg) = h\tilde X_t (g) + \tilde X_t (h)g + \xi hg + hg \zeta.
    \end{align*}
    Using \eqref{eqn:X_tildeX} repeatedly, this is rearranged into
    \begin{align*}
        X_t(hg) & = h(X_t(g) - \xi g -g\zeta)\\
                & \quad+ ( X_t (h)-\xi h -h\zeta) g + \xi hg + hg \zeta\\
                & = hX_t(g) + X_t(h)g  - h(\xi+\zeta)g
    \end{align*}
    But, by letting $g=e$ in \eqref{eqn:X_tildeX}, we have $X_t(e) = \xi + \zeta$, 
    which shows that the above implies \eqref{eqn:group_aff}.
    Thus, \pfimply{iii}{ii}.

    Next, suppose (ii) is true. From \eqref{eqn:group_aff} and \eqref{eqn:X_tildeX}, 
    \begin{align*}
        \tilde X_t (hg) = h X_t(g) + X_t(h)g - h X_t(e)g - \xi hg - hg \zeta.
    \end{align*}
    Substituting \eqref{eqn:X_tildeX} repeatedly, and using $X_t(e) = \xi + \zeta$, this reduces to \eqref{eqn:group_aff_tilde}.
    Therefore, \pfimply{ii}{iii}.
\end{proof}

This lemma states that once the vector field at $g=e$, or $X_t(e)$ is extracted out (after being split and translated), the state trajectory independence of error can be determined solely by the remaining part $\tilde X_t(g)$ via \eqref{eqn:group_aff_tilde}.
In other words, the state trajectory of the error state of \eqref{eqn:dot_g_Xtilde} is not affected by the presence of $\xi$ and $\zeta$ that are independent of $g$. 

\begin{example}{(ETI)}\label{ex:ETI}
    We verify the state trajectory independence of the error for the vehicle dynamics presented in \Cref{ex:UAV1}.
    Let $g=(R,v,x),h=(T, p, q)\in\SEtt$. 
    Then, we have $hg = (TR, Tv+p, Tx+q)$. 
    From \eqref{eqn:tilde_X_3_IMU}, we have $\tilde X(hg) = (0, 0, Tv+p)$. 
    On the other hand, 
    \begin{align*}
        &h \tilde X(g) +\tilde X(h) g =
        \aSEtt{T}{p}{q}\asett{0_{3\times 3}}{0_{3\times 1}}{v} \\
        &+ \asett{0_{3\times 3}}{0_{3\times 1}}{p} \aSEtt{R}{v}{x}\\
        & = \asett{0_{3\times 3}}{0_{3\times 1}}{Tv+p}.
    \end{align*}
    Therefore, \eqref{eqn:group_aff_tilde} is satisfied, and according to \Cref{thm:groupaff}, the error state of the vehicle dynamics is state trajectory independent. 

    While we verified \eqref{eqn:group_aff_tilde} using the specific choice of $\tilde X$ presented in \Cref{ex:UAV1}, one can show \eqref{eqn:group_aff_tilde} is satisfied by any of choice of $\xi$, $\zeta$, and $\tilde X$.
\end{example}

The following corollary presents the governing equations for the error state when the errors are state trajectory independent. 

\begin{corollary}{(ETI)}\label{cor:dot_f_h}
    Suppose \eqref{eqn:dot_g_Xt} or \eqref{eqn:dot_g_Xtilde} is state trajectory independent. 
    The corresponding left-invariant error $f=g^{-1}\bar g$ is governed by
    \begin{align}
        \dot f & = X_t(f) - X_t(e) f,\label{eqn:dot_f_0}\\
               & =\tilde X_t(f) + f \zeta(t) - \zeta(t) f,\label{eqn:dot_f_1}
    \end{align}
    Similarly, the right-invariant error $h = \bar g g^{-1}$ is governed by
    \begin{align}
        \dot h & = X_t(h) - h X_t(e),\label{eqn:dot_h_0} \\
               & =\tilde X_t(h) + \xi(t) h - h \xi(t),\label{eqn:dot_h_2}
    \end{align}
\end{corollary}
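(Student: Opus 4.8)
The plan is to obtain each governing equation by directly differentiating the corresponding error definition in \eqref{eqn:f}, substituting the dynamics of $g$ and $\bar g$, and then invoking the group-affine identity \eqref{eqn:group_aff} from \Cref{thm:groupaff} to resolve the mixed term $X_t(gf)$ (resp.\ $X_t(hg)$) that arises. The two trivialized forms \eqref{eqn:dot_f_1} and \eqref{eqn:dot_h_2} should then follow immediately by inserting the decomposition \eqref{eqn:X_tildeX}, namely $X_t(g)=\tilde X_t(g)+\xi g+g\zeta$, together with $X_t(e)=\xi+\zeta$.

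First I would treat the left-invariant error $f=g^{-1}\bar g$. Using the matrix identity $\tfrac{d}{dt}g^{-1}=-g^{-1}\dot g\,g^{-1}$ and $\bar g=gf$, the product rule gives
\begin{align*}
\dot f = -g^{-1}\dot g\,g^{-1}\bar g + g^{-1}\dot{\bar g}
       = -g^{-1}X_t(g)\,f + g^{-1}X_t(gf).
\end{align*}
Applying \eqref{eqn:group_aff} to the ordered pair $(g,f)$ to expand $X_t(gf)=gX_t(f)+X_t(g)f-gX_t(e)f$ and left-multiplying by $g^{-1}$, the two $g^{-1}X_t(g)f$ contributions cancel, leaving $\dot f = X_t(f)-X_t(e)f$, which is \eqref{eqn:dot_f_0}. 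Substituting the decomposition for $X_t(f)$ and $X_t(e)$ then cancels the common $\xi f$ term and yields $\dot f=\tilde X_t(f)+f\zeta-\zeta f$, i.e.\ \eqref{eqn:dot_f_1}.

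The right-invariant case is structurally identical with left and right interchanged. Differentiating $h=\bar g g^{-1}$ and using $\bar g=hg$ gives $\dot h = X_t(hg)g^{-1}-hX_t(g)g^{-1}$; expanding $X_t(hg)$ via \eqref{eqn:group_aff} and cancelling the $hX_t(g)g^{-1}$ terms produces $\dot h=X_t(h)-hX_t(e)$, and the decomposition reduces this to $\tilde X_t(h)+\xi h-h\xi$, which is \eqref{eqn:dot_h_2}. I expect no serious obstacle here: the only points requiring care are the sign and factor placement in the inverse-derivative identity and keeping the left/right ordering consistent throughout, since every cancellation depends on \eqref{eqn:group_aff} being applied to the correctly ordered pair. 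It is worth emphasizing that the hypothesis of state trajectory independence enters \emph{solely} through the availability of \eqref{eqn:group_aff}; without it, the residual $g^{-1}X_t(g)f$ (resp.\ $hX_t(g)g^{-1}$) term would fail to cancel and the error dynamics would retain an explicit dependence on the underlying trajectory.
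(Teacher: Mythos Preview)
Your proposal is correct and follows essentially the same strategy as the paper: differentiate the error, substitute the dynamics of $g$ and $\bar g$, and invoke the group-affine identity from \Cref{thm:groupaff} to cancel the trajectory-dependent term. The only difference is the order in which the two forms are obtained: you apply \eqref{eqn:group_aff} to get \eqref{eqn:dot_f_0} first and then insert the decomposition $X_t=\tilde X_t+\xi\cdot+\cdot\,\zeta$ to reach \eqref{eqn:dot_f_1}, whereas the paper works directly in the decomposed form \eqref{eqn:dot_g_Xtilde}, uses the equivalent condition \eqref{eqn:group_aff_tilde} to establish \eqref{eqn:dot_f_1} first, and then recovers \eqref{eqn:dot_f_0} by substituting back. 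Both orderings are equally valid and involve the same cancellations.
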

\begin{proof}
    First, we show \eqref{eqn:dot_f_1} as follows.  
    From \eqref{eqn:group_aff_tilde}, we have
    $ \tilde X_t(g f) = g \tilde X_t(f) + \tilde X_t(g)f$,
    which is rearranged into the following, after multiplying both sides by $g^{-1}$ on the left with $gf = \bar g$:
    \begin{align}
        g^{-1}\tilde X_t(\bar g) - g^{-1} \tilde X_t(g)f =  \tilde X_t(f) .
        \label{eqn:cor1_0}
    \end{align}
    The time-derivative of $f=g^{-1}\bar g$ is given by
    \begin{align*}
        \dot f & = -g^{-1} \dot g g^{-1} \bar g + g^{-1}\dot {\bar g}\\
               & = -g^{-1} (\xi g + \tilde X_t(g) + g\zeta)  f + g^{-1}(\xi \bar g + \tilde X_t(\bar g) + \bar g \zeta).
    \end{align*}
    Substituting \eqref{eqn:cor1_0} and $gf=\bar g$, this reduces to \eqref{eqn:dot_f_1}.
    Next, \eqref{eqn:dot_f_0} is presented in \cite{barrau2016invariant}, or it can be obtained by substituting $X_t(f) = \xi f + \tilde X_t(f) + f\zeta = (X_t(e) -\zeta)f + \tilde X_t(f) + f\zeta$ into \eqref{eqn:dot_f_1}.

    Finally, \eqref{eqn:dot_h_0} and \eqref{eqn:dot_h_2} can be derived similarly. 
\end{proof}

While \eqref{eqn:dot_f_0}--\eqref{eqn:dot_h_2} are stated as the necessary conditions of the state trajectory independence (ETI) in \Cref{cor:dot_f_h}, they are also sufficient conditions. 
More specifically, if \eqref{eqn:dot_f_0} is satisfied, then it directly implies that the error is governed by a differential equation dependent of the error itself, and therefore, the error of \eqref{eqn:dot_g_Xt} or \eqref{eqn:dot_g_Xtilde} is state trajectory independent according to \Cref{def:sti}.
As such, each of \eqref{eqn:dot_f_0}--\eqref{eqn:dot_h_2} is an equivalent condition to \Cref{thm:groupaff}.

Next, interestingly, as shown in \eqref{eqn:dot_f_1}, the dynamics of the left-invariant error $f$ does not depend on $\xi$ of \eqref{eqn:dot_g_Xtilde}.
Similarly, in \eqref{eqn:dot_h_2}, the dynamics of the right-invariant error $h$ is not contributed by $\zeta$ of \eqref{eqn:dot_g_Xtilde}.
This is further illustrated by the following example. 

\begin{example}
    Let $g = (R, v, x)$ and $\bar g = (\bar R, \bar v, \bar x)$ such that the left-invariant error $f=g^{-1}\bar g$ is obtained by
    \begin{align*}
        f & = \aSEtt{R^T}{-R^Tv}{-R^T x}\aSEtt{\bar R}{\bar v}{\bar x}\\
          & = \aSEtt{R^T \bar R}{R^T(\bar v -v)}{R^T(\bar x -x)},
    \end{align*}
    which is defined to be $f = (Q, u, y)$. 
    From \Cref{ex:UAV1}, we have $\tilde X(f) = (0,0, u)$ and $\zeta = (\hat \Omega , a, 0)$. 
    Substituting this into \eqref{eqn:dot_f_1}, 
    \begin{align*}
        \dot f & = (0, 0, u) +
        \aSEtt{Q}{u}{y}\asett{\hat\Omega}{a}{0} \\
               & \quad - \asett{\hat\Omega}{a}{0}\aSEtt{Q}{u}{y}\\
               & = (0,0,u) + (Q\hat\Omega, Qa, 0) - (\hat\Omega Q, \hat\Omega u +a, \hat\Omega y)\\
               & = (Q\hat\Omega - \hat\Omega Q, Qa -a -\hat\Omega u , u-\hat\Omega y),
    \end{align*}
    which is independent of $\xi = (0, \grav e_3, 0)$. 

    Similarly, let $h = (T, p, q)\in\SEtt$. 
    From \eqref{eqn:dot_h_2}, we can show that
    \begin{align*}
        \dot h = (0, \grav e_3 - T\grav e_3, p),
    \end{align*}
    which does not depend on $\zeta = (\hat\Omega, a, 0)$.
\end{example}

\begin{remark}{(Geometric Interpretation)}\label{rem:ETI_geo}
    The decomposition \eqref{eqn:dot_g_Xtilde} reveals a clear geometric structure underlying state trajectory independence.  
    The terms $g \zeta(t)$ and $\xi(t) g$ correspond to infinitesimal left and right translations on the group, respectively, generated by the Lie algebra elements $\zeta(t)$ and $\xi(t)$ that evolve independently of the current state $g$.  
    These components represent uniform ``drift'' directions that are globally consistent across $\G$, and hence do not influence the relative motion between trajectories.  
    In contrast, the residual field $\tilde X_t(g)$ captures the part of the dynamics that depends on the group element itself, describing the geometry-dependent flow on $\G$.  
    The ETI property ensures that, when comparing two trajectories evolving under the same control input, the relative motion depends only on this intrinsic geometric component $\tilde X_t(g)$, and not on the global drift terms $\xi$ or $\zeta$.  
    This interpretation aligns with the fundamental idea of invariant filtering: estimation errors evolve according to the intrinsic group geometry, independent of the absolute state trajectory.
\end{remark}

\section{State Trajectory Independence for Relative Motion}\label{sec:STIRM}

In this section, we discuss how the concept of state trajectory independence can be extended to the estimation of relative motion between two trajectories of a dynamical system. 
This framework is particularly useful for the relative pose estimation of multi-agent systems, where each agent is governed by the same equation of motion but may be driven by distinct control inputs.

\subsection{Relative Motion Estimation Problem}

Consider two trajectories $g_1$ and $g_2$ of \eqref{eqn:dot_g_Xt} controlled by the inputs $u_1$ and $u_2$, respectively. 
Specifically,
\begin{align}
    \dot g_i = X(g_i, u_i(t)) = X_t^i(g_i),\label{eqn:dot_gi}
\end{align}
for $i\in\{1,2\}$. 
According to \Cref{lem:decomp}, this can be decomposed as
\begin{align}
    \dot g_i = \xi_i(t) g_i + \tilde X^i_t(g_i) + g_i\zeta_i(t), \label{eqn:dot_gi_Xtilde}
\end{align}
where $\tilde X^i:\Re\times\G\to\T\G$ and $\xi_i,\zeta_i:\Re\to\g$ for $i\in\{1,2\}$ satisfy
\begin{align}
    X(e, u_i) = \xi_i + \zeta_i,\quad \tilde X(e, u_i) = 0.
\end{align}

The relative motion between $g_1$ and $g_2$ can be described by the left-invariant relative configuration
\begin{align}
    g_{12} = g_1^{-1} g_2,\label{eqn:g12_L}
\end{align}
or by the right-invariant relative configuration
\begin{align}
    g_{12} = g_1 g_2^{-1},\label{eqn:g12_R}
\end{align}
where we adopt a slight abuse of notation by denoting both relative states by $g_{12}$.
They will, however, be clearly distinguished in the subsequent developments. 

The objective of this section is to formulate the state trajectory independence of these relative dynamics to construct an invariant filter for the relative state $g_{12}$ using the concatenated control $u_{12}=(u_1, u_2)$.

\subsection{State Trajectory Independence of Relative Motion (RTI)}

The first question to address is whether we can construct an estimator solely in terms of the relative state $g_{12}$. 
In general, estimating the relative motion requires knowledge of both $g_1$ and $g_2$. 
In other words, the relative-state estimator must be constructed for $(g_1,g_2)$ (or equivalently for $(g_1, g_{12})$ or $(g_{12}, g_2)$) on $\G\times\G$. 
This is undesirable since it doubles the dimension of the state to be estimated. 

However, if the dynamics of the relative motion depend on neither $g_1$ nor $g_2$, the estimator for the relative motion $g_{12}$ can be constructed directly on $\G$ without the need to estimate $g_1$ or $g_2$.  
This property is defined as follows. 

\begin{definition}{(RTI)}\label{def:rel_sti}
    Consider the trajectories $g_1$ and $g_2$ of \eqref{eqn:dot_gi} driven by $u_1$ and $u_2$, respectively.
    Let the relative motion between $g_1$ and $g_2$ be defined by \eqref{eqn:g12_L} or \eqref{eqn:g12_R}.
    We state that ``the left (resp. right) \underline{relative state} of \eqref{eqn:dot_gi} is state trajectory independent,'' or that ``\eqref{eqn:dot_gi} is $L$-RTI (resp. $R$-RTI),'' if the relative state $g_{12}=g_1^{-1}g_2$ (resp. $g_{12}= g_1 g_2^{-1}$) is governed by a differential equation dependent only on the relative state itself and the concatenated control $u_{12}=(u_1, u_2)$ only. 
\end{definition}

While this is analogous to the state trajectory independence of the \underline{error state} (ETI) defined in \Cref{def:sti}, the state trajectory independence of the \underline{relative motion} (RTI) deals with two trajectories influenced by two distinct control inputs. 
More explicitly, \Cref{def:sti} concerns the error between $g$ and $\bar g$ driven by the same input $u$, whereas \Cref{def:rel_sti} deals with the difference between $g_1$ and $g_2$ driven by $u_1$ and $u_2$, respectively. 
Thus, we distinguish the state trajectory independence of the \underline{estimation error} from that of the \underline{relative trajectory}. 

The equivalent conditions for the state trajectory independence of the relative state are summarized below, first for the left-invariant relative state ($L$-RTI), and later for the right-invariant relative state ($R$-RTI). 

\begin{theorem}{($L$-RTI)}\label{thm:rel_sti_L}
    Consider the left-invariant relative trajectory $g_{12}=g_1^{-1}g_2$ given by \eqref{eqn:g12_L}.
    The following statements are equivalent. 

    \renewcommand{\labelenumi}{\textit{(\roman{enumi})}}
    \begin{enumerate}
        \item The left-invariant relative state $g_{12}$ of \eqref{eqn:g12_L} is state trajectory independent ($L$-RTI). 
        \item For any $h,g\in\G$ and $t\in\Re$, the vector field $X$ in \eqref{eqn:dot_gi} satisfies
            \begin{align}
                X(hg, u_2)= hX(g, u_2) + X(h, u_1)g - hX(e, u_1)g .\label{eqn:cond_g12L_autonomous}
            \end{align}
        \item For any $h,g\in\G$ and $t\in\Re$, the vector field $\tilde X$ in \eqref{eqn:dot_gi_Xtilde} satisfies
            \begin{align}
                \tilde X(hg, u_2) & = h\tilde X(g,u_2) + \tilde X(h, u_1)g  \nonumber \\
                                  & \quad + h (\xi_2-\xi_1) g + (\xi_1-\xi_2)hg .\label{eqn:cond_g12Ltilde_autonomous}
            \end{align}
        \item The relative state is governed by
            \begin{align}
                \dot g_{12} & = X(g_{12}, u_2) - X(e, u_1)g_{12}\label{eqn:dot_g12_L}, \\
                            & = (\xi_2-\xi_1-\zeta_1) g_{12} +\tilde X(g_{12}, u_2) +  g_{12}\zeta_2 . \label{eqn:dot_g12_L_tilde}
            \end{align}
    \end{enumerate}
\end{theorem}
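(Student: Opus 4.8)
The plan is to derive all four equivalences from a single differentiation of $g_{12}$, following the template of \Cref{thm:groupaff} and \Cref{cor:dot_f_h} but now carrying two distinct controls $u_1$ and $u_2$. First I would differentiate the left-invariant relative state \eqref{eqn:g12_L}, using $\dot g_i = X(g_i, u_i)$ and $g_1^{-1} g_2 = g_{12}$, to obtain
\[
    \dot g_{12} = g_1^{-1} X(g_2, u_2) - g_1^{-1} X(g_1, u_1)\, g_{12},
\]
and then eliminate $g_2$ through $g_2 = g_1 g_{12}$, so that with $h = g_1$,
\[
    \dot g_{12} = h^{-1}\!\left[\, X(h g_{12}, u_2) - X(h, u_1)\, g_{12}\,\right].
\]
The crucial observation is that the right-hand side depends on the absolute trajectory only through $h = g_1$; hence $L$-RTI holds precisely when this expression is independent of $h$.

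From this independence I would extract the remaining conditions. If the expression is independent of $h$, it equals its value at $h = e$, namely $X(g_{12}, u_2) - X(e, u_1) g_{12}$, which is exactly \eqref{eqn:dot_g12_L}; this gives \pfimply{i}{iv} (first line), and \pfimply{iv}{i} is immediate since \eqref{eqn:dot_g12_L} manifestly depends only on $g_{12}$. Setting the expression equal to its value at $h = e$ for all $h, g_{12} \in \G$ and left-multiplying by $h$ then reproduces \eqref{eqn:cond_g12L_autonomous} with $g = g_{12}$, yielding \pfequiv{i}{ii}. The universal quantifier over $h$ and $g$ in \eqref{eqn:cond_g12L_autonomous} is legitimate because, for the relative dynamics to define a genuine flow on $\G$, the formula for $\dot g_{12}$ must be valid for every admissible $g_1$ sharing the same $g_{12}$; through the initial conditions, $h = g_1$ and $g = g_{12}$ can be prescribed independently and hence range over all of $\G$.

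The final equivalence \pfequiv{ii}{iii} and the passage from \eqref{eqn:dot_g12_L} to \eqref{eqn:dot_g12_L_tilde} are then purely algebraic, obtained by substituting the decomposition $X(g, u_i) = \tilde X(g, u_i) + \xi_i g + g \zeta_i$ and $X(e, u_i) = \xi_i + \zeta_i$ from \Cref{lem:decomp}. Inserting these into \eqref{eqn:cond_g12L_autonomous} and expanding, the two $h \zeta_1 g$ terms cancel and the $h g \zeta_2$ terms cancel across the equality, so that all $\zeta$ dependence disappears and only the cross terms $h(\xi_2 - \xi_1) g + (\xi_1 - \xi_2) h g$ survive, which is exactly \eqref{eqn:cond_g12Ltilde_autonomous}; the same substitution turns \eqref{eqn:dot_g12_L} into \eqref{eqn:dot_g12_L_tilde}. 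Notably, unlike the single-system ETI case in \eqref{eqn:group_aff_tilde}, the $\xi$ terms do \emph{not} cancel, since $\xi_1 \neq \xi_2$ encodes the distinct right-invariant drifts of the two controls (and they would vanish, recovering \Cref{thm:groupaff}, when $u_1 = u_2$).

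I expect the main obstacle to be the quantifier argument rather than the algebra: one must argue carefully that trajectory independence along the realized pair $(g_1, g_2)$ forces the identity \eqref{eqn:cond_g12L_autonomous} to hold for \emph{all} $h \in \G$, not merely along the particular trajectory, and this relies on the freedom to prescribe $g_1$ and $g_{12}$ independently via initial conditions while holding the instantaneous control values $u_1, u_2$ fixed. Once this universal reading is secured, the reduction to $h = e$ and the subsequent $X$-to-$\tilde X$ substitutions are mechanical, directly mirroring the single-control proofs of \Cref{thm:groupaff} and \Cref{cor:dot_f_h}.
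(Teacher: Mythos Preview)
Your proposal is correct and follows essentially the same route as the paper's proof: both compute $\dot g_{12}$ from \eqref{eqn:dot_gi}, rewrite it in terms of $h=g_1$ and $g_{12}$, and then argue that $L$-RTI is equivalent to independence of this expression from $h$, which upon specialization to $h=e$ (the paper phrases this as left-multiplying $g_1,g_2$ by $l$ and then choosing $l=g_1^{-1}$) yields both \eqref{eqn:dot_g12_L} and \eqref{eqn:cond_g12L_autonomous}; the equivalence \pfequiv{ii}{iii} and the passage to \eqref{eqn:dot_g12_L_tilde} are handled identically by substituting the decomposition of \Cref{lem:decomp}. Your explicit attention to the quantifier issue---why $h$ and $g$ may be taken to range over all of $\G$ via free choice of initial conditions---is a point the paper leaves implicit, so your write-up is slightly more careful there, but the underlying argument is the same.
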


\begin{proof}
    It is straightforward to show \pfequiv{ii}{iii} using $X^i(g_i) = \xi_i g_i + \tilde X^i(g_i) + g_i\zeta_i$.
    Similarly, \eqref{eqn:dot_g12_L} is equivalent to \eqref{eqn:dot_g12_L_tilde}.

    Next, if (ii) holds, from \eqref{eqn:cond_g12L_autonomous}, we have
    \begin{align}
        X(g_1g_{12}, u_2) & = g_1X( g_{12}, u_2) + X(g_1, u_1) g_{12} \nonumber \\
                          & \quad - g_1 X(e, u_1)g_{12}.\label{eqn:thm2_0}
    \end{align}
    Multiplying both sides by $g_1^{-1}$ on the left and rearranging, 
    \begin{align}
        -g_1^{-1}& X(g_1, u_1) g_{12} + g_1^{-1} X(g_1g_{12},  u_2)\nonumber\\
                 &= X(g_{12}, u_2) - X(e,u_1)g_{12}.\label{eqn:thm2_1}
    \end{align}
    The time derivative of $g_{12} = g_1^{-1} g_2$ is given by
    \begin{align}
        \dot g_{12} & = -g_1^{-1} X(g_1, u_1) g_1^{-1}g_2 + g_1^{-1} X(g_2, u_2).\label{eqn:thm2_2}
    \end{align}
    Substituting \eqref{eqn:thm2_1} into \eqref{eqn:thm2_2} yields \eqref{eqn:dot_g12_L}.
    Hence, \pfimply{ii}{iv}.

    Suppose (iv) holds. 
    Then, \eqref{eqn:dot_g12_L} implies \eqref{eqn:thm2_1}, which can be rearranged into \eqref{eqn:thm2_0}, showing \eqref{eqn:cond_g12L_autonomous}. 
    Thus, \pfimply{iv}{ii}.

    If (i) is true, then the right-hand side of \eqref{eqn:thm2_2} should remain unchanged when $g_1$ and $g_2$ are multiplied by any $l\in\G$ on the left:
    \begin{align*}
        -g_1^{-1}& X(g_1, u_1) g_{12} + g_1^{-1} X(g_2, u_2) = \\
                 & -g_1^{-1}l^{-1} X(lg_1, u_1) g_{12} + g_1^{-1} l^{-1} X(lg_2, u_2).
    \end{align*}
    Setting $l=g_1^{-1}$ and left-multiplying both sides by $g_1$ recovers \eqref{eqn:thm2_0}, proving \pfimply{i}{ii}.

    Finally, \eqref{eqn:dot_g12_L} implies (i) by \Cref{def:rel_sti}, completing the equivalence of (i)–(iv).
\end{proof}

While the conditions \eqref{eqn:cond_g12L_autonomous} and \eqref{eqn:cond_g12Ltilde_autonomous} have a structure similar to \eqref{eqn:group_aff} and \eqref{eqn:group_aff_tilde}, they are more general since the state trajectory independence of the relative state deals with cases where $u_1$ and $u_2$ may differ. 
It is straightforward to observe that when $u_1=u_2$ and $\xi_1=\xi_2$, \eqref{eqn:cond_g12L_autonomous} and \eqref{eqn:cond_g12Ltilde_autonomous} reduce to \eqref{eqn:group_aff} and \eqref{eqn:group_aff_tilde}, respectively.

Next, we verify the $L$-RTI property for the IMU model below.

\begin{example}{($L$-RTI)}\label{ex:L-RTI}
    Let $g_1 = (R_1, v_1, x_1)$ and $g_2 = (R_2, v_2, x_2)$ so that the left-invariant relative state $g_{12}=g_1^{-1} g_2$ is given by
    \begin{align*}
        g_{12} & = (R_1^T R_2,\, R_1^T(v_2 - v_1),\, R_1^T(x_2 -x_1)),
    \end{align*}
    which is denoted by $g_{12} = (R_{12}, v_{12}, x_{12})$. 
    From \Cref{ex:UAV1}, we have $\tilde X(g_{12}) = (0,0, v_{12})$, $\xi_i = (0, \grav e_3, 0)$, and $\zeta_i = (\hat \Omega_i , a_i, 0)$. 

    We now verify \eqref{eqn:cond_g12Ltilde_autonomous}.
    Let $g=(R,v,x)$ and $h=(T, p, q)\in\SEtt$. 
    Then, $hg = (TR, Tv+p, Tx+q)$, implying that the left-hand side of \eqref{eqn:cond_g12Ltilde_autonomous} is $\tilde X(hg, u_2) = (0, 0, Tv+p)$. 
    On the other hand, since $\xi_1=\xi_2$,
    \begin{align*}
    & h\tilde X(g,u_2) + \tilde X(h, u_1)g  + h (\xi_2-\xi_1) g + (\xi_1-\xi_2)hg \\
    & = (T, p, q) (0, 0, v) + (0, 0, p) (R, v, x) \\
    & = (0, 0, Tv) + (0, 0, p) = (0, 0, Tv + p).
    \end{align*}
    Therefore, \eqref{eqn:cond_g12Ltilde_autonomous} is satisfied, and according to \Cref{thm:rel_sti_L}, the left-invariant relative state $g_{12}$ is state trajectory independent.

    Substituting this into \eqref{eqn:dot_g12_L_tilde}, we obtain
    \begin{align}
        \dot g_{12} & = - (\hat\Omega_1, a_1, 0) (R_{12}, v_{12}, x_{12}) + (0, 0, v_{12})\nonumber\\
                    & \quad + (R_{12}, v_{12}, x_{12}) (\hat\Omega_2, a_2, 0) \nonumber \\
                    & = (R_{12}\hat\Omega_2 - \hat\Omega_1 R_{12},\, R_{12}a_2 -\hat\Omega_1 v_{12} -a_1,\, v_{12} -\hat\Omega_1 x_{12}). \label{eqn:dot_g12_L_ex}
    \end{align}
\end{example}

Interestingly, according to \eqref{eqn:cond_g12Ltilde_autonomous}, the state trajectory independence of the left-invariant relative state remains unaffected by $\zeta_1$ and $\zeta_2$. 
Furthermore, \eqref{eqn:cond_g12Ltilde_autonomous} simplifies if $\xi_1\equiv\xi_2$, as illustrated in \Cref{ex:L-RTI}.

We can develop a similar result for the right-invariant relative state.

\begin{theorem}{($R$-RTI)}\label{thm:rel_sti_R}
    Consider the right-invariant relative trajectory $g_{12}=g_1g_2^{-1}$ given by \eqref{eqn:g12_R}.
    The following statements are equivalent. 

    \renewcommand{\labelenumi}{\textit{(\roman{enumi})}}
    \begin{enumerate}
        \item The right-invariant relative state $g_{12}$ of \eqref{eqn:g12_R} is state trajectory independent ($R$-RTI). 
        \item For any $h,g\in\G$ and $t\in\Re$, the vector field $X$ in \eqref{eqn:dot_gi} satisfies
            \begin{align}
                X(hg, u_1)= hX(g, u_2) + X(h, u_1)g - hX(e, u_2)g .\label{eqn:cond_g12R_autonomous}
            \end{align}
        \item For any $h,g\in\G$ and $t\in\Re$, the vector field $\tilde X$ in \eqref{eqn:dot_gi_Xtilde} satisfies
            \begin{align}
                \tilde X(hg, u_1) & = h\tilde X(g,u_2) + \tilde X(h, u_1)g  \nonumber \\
                                  & \quad + h (\zeta_1-\zeta_2) g + hg(\zeta_2-\zeta_1) .\label{eqn:cond_g12Rtilde_autonomous}
            \end{align}
        \item The relative state is governed by
            \begin{align}
                \dot g_{12} & = X(g_{12}, u_1) - g_{12} X(e, u_2),\label{eqn:dot_g12_R}\\
                            & = \xi_1 g_{12} + \tilde X(g_{12}, u_1) + g_{12}(\zeta_1-\zeta_2 - \xi_2 ) .\label{eqn:dot_g12_R_tilde}
            \end{align}
    \end{enumerate}
\end{theorem}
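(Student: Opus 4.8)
The plan is to mirror the proof of \Cref{thm:rel_sti_L} line by line, since the right-invariant relative state $g_{12}=g_1g_2^{-1}$ plays the same structural role as the left-invariant one, with the positions of $u_1$, $u_2$ and the translation directions interchanged. I would first dispatch the purely algebraic equivalences \pfequiv{ii}{iii} and the equivalence of the two forms \eqref{eqn:dot_g12_R} and \eqref{eqn:dot_g12_R_tilde} of statement (iv). Both follow by substituting the decomposition $X(g_i,u_i)=\xi_i g_i+\tilde X(g_i,u_i)+g_i\zeta_i$ together with $X(e,u_i)=\xi_i+\zeta_i$ and $\tilde X(e,u_i)=0$ into each side and collecting terms; the cross terms $h\xi_2 g$ cancel, leaving exactly the $\zeta$-dependent correction $h(\zeta_1-\zeta_2)g+hg(\zeta_2-\zeta_1)$ that distinguishes \eqref{eqn:cond_g12Rtilde_autonomous} from the single-control identity \eqref{eqn:group_aff_tilde} (note that, dually to $L$-RTI, the right-invariant condition is free of $\xi_1,\xi_2$ and collapses when $\zeta_1=\zeta_2$). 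This part is routine bookkeeping.

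The core of the argument is the chain connecting (ii), (iv), and (i) through the time derivative of $g_{12}=g_1g_2^{-1}$. Differentiating and using $\tfrac{d}{dt}g_2^{-1}=-g_2^{-1}\dot g_2 g_2^{-1}$ gives
\begin{align}
\dot g_{12} = X(g_1,u_1)g_2^{-1} - g_{12}X(g_2,u_2)g_2^{-1}, \label{eqn:plan_dotg12R}
\end{align}
the right-invariant analog of \eqref{eqn:thm2_2}. To show \pfimply{ii}{iv}, I would instantiate \eqref{eqn:cond_g12R_autonomous} at $h=g_{12}$, $g=g_2$ (so that $hg=g_{12}g_2=g_1$), solve for $X(g_1,u_1)$, and substitute into \eqref{eqn:plan_dotg12R}; the terms $g_{12}X(g_2,u_2)g_2^{-1}$ cancel, collapsing the expression to \eqref{eqn:dot_g12_R}. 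The converse \pfimply{iv}{ii} reverses these steps: equating \eqref{eqn:plan_dotg12R} with \eqref{eqn:dot_g12_R}, right-multiplying by $g_2$, and using $g_1=g_{12}g_2$ recovers the instance of \eqref{eqn:cond_g12R_autonomous} at $h=g_{12}$, $g=g_2$, which holds for arbitrary $h,g$ because the initial conditions of $g_1,g_2$ are free.

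Finally, \pfimply{i}{ii} exploits right-translation invariance: since $g_{12}=g_1g_2^{-1}$ is unchanged under $g_i\mapsto g_i r$ for any $r\in\G$, RTI forces the right-hand side of \eqref{eqn:plan_dotg12R} to be invariant under this substitution. Writing out that invariance and then choosing $r=g_2^{-1}$ collapses $X(g_1 r,u_1)$ and $X(g_2 r,u_2)$ to evaluations at $g_{12}$ and $e$, yielding \eqref{eqn:dot_g12_R}; right-multiplying by $g_2$ then gives the functional equation \eqref{eqn:cond_g12R_autonomous}. The closing implication \pfimply{iv}{i} is immediate from \Cref{def:rel_sti}, since \eqref{eqn:dot_g12_R} depends only on $g_{12}$ and $u_{12}$. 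The step I expect to require the most care is the derivative computation \eqref{eqn:plan_dotg12R} and the bookkeeping of the conjugation-type factors $g_2^{-1}(\cdot)g_2^{-1}$ it introduces, together with the correct choice $r=g_2^{-1}$ in the invariance step—the right-invariant counterpart of the choice $l=g_1^{-1}$ used for $L$-RTI—which is what turns the pointwise invariance statement into the global identity (ii).
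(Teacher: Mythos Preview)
Your proposal is correct and follows essentially the same approach as the paper's proof: the same derivative formula \eqref{eqn:plan_dotg12R}, the same instantiation $h=g_{12}$, $g=g_2$ for \pfimply{ii}{iv}, the same right-translation invariance argument with the choice $r=g_2^{-1}$ for \pfimply{i}{ii}, and the same bookkeeping for \pfequiv{ii}{iii}. The organization and key steps coincide with the paper's proof of \Cref{thm:rel_sti_R}.
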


\begin{proof}
    It is straightforward to show \pfequiv{ii}{iii} using $X^i(g_i) = \xi_i g_i + \tilde X^i(g_i) + g_i\zeta_i$.
    Similarly, \eqref{eqn:dot_g12_R} is equivalent to \eqref{eqn:dot_g12_R_tilde}.

    If (ii) holds, from \eqref{eqn:cond_g12R_autonomous} we have
    \begin{align}
        X(g_{12}g_2, u_1)  &= g_{12} X(g_2,  u_2) + X(g_{12}, u_1)g_2 \nonumber\\
                           & \quad -g_{12} X(e,u_2)g_2.\label{eqn:thm3_0}
    \end{align}
    Multiplying both sides by $g_2^{-1}$ on the right and rearranging, 
    \begin{align}
        & X(g_{12}g_2, u_1)g_2^{-1}  -g_{12} X(g_2,  u_2)g_2^{-1}\nonumber\\
        &= X(g_{12}, u_1) - g_{12} X(e,u_2).\label{eqn:thm3_1}
    \end{align}
    The time derivative of $g_{12} = g_1 g_2^{-1}$ is given by
    \begin{align}
        \dot g_{12} & = X(g_1, u_1)g_2^{-1} - g_{12} X(g_2, u_2) g_2^{-1}.\label{eqn:thm3_2}
    \end{align}
    Substituting \eqref{eqn:thm3_1} into \eqref{eqn:thm3_2} with $g_1 =g_{12}g_2$ yields \eqref{eqn:dot_g12_R}.
    Therefore, \pfimply{ii}{iv}.

    Next, suppose (iv) holds. 
    Then, \eqref{eqn:dot_g12_R} implies \eqref{eqn:thm3_1}, which can be rearranged into \eqref{eqn:thm3_0}, showing \eqref{eqn:cond_g12R_autonomous}. 
    Thus, \pfimply{iv}{ii}.

    If (i) holds, the right-hand side of \eqref{eqn:thm3_2} must remain unchanged when $g_1$ and $g_2$ are multiplied by any $l\in\G$ on the right.
    Equivalently,
    \begin{align*}
        X&(g_1, u_1) g_2^{-1} - g_{12} X(g_2, u_2) g_2^{-1} \\
         &= X(g_1l , u_1) l^{-1} g_2^{-1} - g_{12} X(g_2l , u_2) l^{-1}g_2^{-1}.
    \end{align*}
    Setting $l=g_2^{-1}$ and right-multiplying both sides by $g_2$, with $g_1=g_{12}g_2$, recovers \eqref{eqn:thm3_0}, proving \pfimply{i}{ii}.

    Finally, \eqref{eqn:dot_g12_R} implies (i) by \Cref{def:rel_sti}, completing the equivalence of (i)–(iv).
\end{proof}

It turns out that the right-invariant relative state of the IMU model is not state trajectory independent, as shown below.

\begin{example}{($R$-RTI)}\label{ex:R-RTI}
    Let $g_1 = (R_1, v_1, x_1)$ and $g_2 = (R_2, v_2, x_2)$ such that the right-invariant error $g_{12}=g_1 g_2^{-1}$ is given by
    \begin{align*}
        g_{12} & = (R_1 R_2^T,\, -R_1R_2^Tv_2 + v_1,\, -R_1R_2^Tx_2 + x_1),
    \end{align*}
    denoted by $g_{12} = (R_{12}, v_{12}, x_{12})$. 
    From \Cref{ex:UAV1}, we have $\tilde X(g_{12}) = (0,0, v_{12})$, $\xi_i = (0, \grav e_3, 0)$, and $\zeta_i = (\hat \Omega_i, a_i, 0)$. 

    Let $g=(R,v,x)$ and $h=(T, p, q)\in\SEtt$. 
    The left-hand side of \eqref{eqn:cond_g12Rtilde_autonomous} is $\tilde X(hg, u_2) = (0, 0, Tv+p)$. 
    On the other hand,
    \begin{align*}
    & h\tilde X(g,u_2) + \tilde X(h, u_1)g  + h (\zeta_1-\zeta_2) g + hg(\zeta_2-\zeta_1)\\
    & = (T, p, q) (0, 0, v) + (0, 0, p) (R, v, x)  \\
    & \quad + (T, p, q) (\hat\Omega_1 - \hat\Omega_2, a_1 - a_2, 0)(R, v, x) \\
    & \quad - (TR, Tv+p, Tx+q)(\hat\Omega_1 - \hat\Omega_2, a_1 - a_2, 0) \\
    & \neq (0, 0, Tv + p),
    \end{align*}
    showing that \eqref{eqn:cond_g12Rtilde_autonomous} is not satisfied. 
    Therefore, according to \Cref{thm:rel_sti_R}, the right-invariant relative state $g_{12}$ is not state trajectory independent.
\end{example}

In \Cref{thm:groupaff}, we observed that the left-invariant error is state trajectory independent, if and only if the right-invariant error is state trajectory independent, i.e., $L$-ETI $\Leftrightarrow$ $R$-ETI.
However, as shown in \Cref{thm:rel_sti_L} and \Cref{thm:rel_sti_R} (or \Cref{ex:L-RTI} and \Cref{ex:R-RTI}), the state trajectory independence of the left-invariant relative state does not imply that of the right-invariant relative state, i.e., $L$-RTI $\notLeftrightarrow$ $R$-RTI. 

Next, as discussed above, it is straightforward to observe that \eqref{eqn:cond_g12L_autonomous} or \eqref{eqn:cond_g12R_autonomous} implies \eqref{eqn:group_aff}.
Therefore, the state trajectory independence of the relative state is a sufficient condition for the state trajectory independence of the estimation error, i.e., RTI $\Rightarrow$ ETI.

\begin{corollary}\label{cor:RTI_ETI}
    If the left-invariant or the right-invariant relative state $g_{12}$ of \eqref{eqn:dot_g} is state trajectory independent (RTI) as presented in \Cref{def:rel_sti}, then the estimation error $f$ and $h$ of \eqref{eqn:dot_g} is state trajectory independent (ETI) as formulated in \Cref{def:sti}. 
\end{corollary}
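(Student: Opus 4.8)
The plan is to prove the implication by working with the equivalent algebraic characterizations of the two properties, rather than with the trajectory equations directly. The essential observation is that each RTI condition is an identity on the vector field $X$ that is required to hold uniformly over the two control inputs, and therefore contains the single-control ETI condition as the special case $u_1=u_2$.

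Concretely, suppose first that \eqref{eqn:dot_gi} is $L$-RTI. By the equivalence \pfequiv{i}{ii} established in \Cref{thm:rel_sti_L}, this is the same as requiring \eqref{eqn:cond_g12L_autonomous} to hold for all $g,h\in\G$ and all admissible control values $u_1,u_2$. I would then restrict attention to the diagonal $u_1=u_2=u$ for an arbitrary fixed $u$. Setting $X_t(\cdot)=X(\cdot,u)$, every occurrence of $u_1$ and $u_2$ in \eqref{eqn:cond_g12L_autonomous} collapses onto the common argument $u$, and the identity becomes precisely \eqref{eqn:group_aff}. Invoking \Cref{thm:groupaff}, the validity of \eqref{eqn:group_aff} is equivalent to statement (i) of that theorem, namely that both the left-invariant error $f$ and the right-invariant error $h$ are ETI. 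Thus $L$-RTI yields the full ETI conclusion. The $R$-RTI case is handled identically: one starts from \eqref{eqn:cond_g12R_autonomous} via \Cref{thm:rel_sti_R}, and the asymmetric placement of $u_1$ and $u_2$ in that identity becomes immaterial once the two controls are set equal, so the same specialization again reproduces \eqref{eqn:group_aff}.

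The one point deserving explicit justification---the conceptual rather than computational obstacle---is the reading of RTI as a property of the system that holds over all control pairs, not merely for a single prescribed pair $(u_1,u_2)$. \Cref{def:rel_sti} quantifies over the concatenated control $u_{12}=(u_1,u_2)$, so the equivalent conditions of \Cref{thm:rel_sti_L} and \Cref{thm:rel_sti_R} must be satisfied for every $(u_1,u_2)\in\U\times\U$; in particular they hold along the diagonal $u_1=u_2$. Once this is made explicit, the restriction used above is a legitimate instance of the hypothesis, and the reduction from \eqref{eqn:cond_g12L_autonomous} or \eqref{eqn:cond_g12R_autonomous} to \eqref{eqn:group_aff} is immediate, completing the proof of RTI $\Rightarrow$ ETI.
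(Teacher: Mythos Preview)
Your proposal is correct and follows essentially the same route as the paper: invoke \Cref{thm:rel_sti_L} (resp.\ \Cref{thm:rel_sti_R}) to obtain \eqref{eqn:cond_g12L_autonomous} (resp.\ \eqref{eqn:cond_g12R_autonomous}) for all $(u_1,u_2)$, specialize to $u_1=u_2$ to recover \eqref{eqn:group_aff}, and conclude ETI via \Cref{thm:groupaff}. Your explicit remark on the quantification over control pairs is a useful clarification, but the argument is otherwise identical to the paper's.
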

\begin{proof}
    Suppose the left-invariant (resp. right-invariant) relative state $g_{12}$ is state trajectory independent.
    According to \Cref{thm:rel_sti_L} (resp. \Cref{thm:rel_sti_R}), \eqref{eqn:cond_g12L_autonomous} (resp. \eqref{eqn:cond_g12R_autonomous}) holds for any $u_1$ and $u_2$. 
    Setting $u_1=u_2$ yields \eqref{eqn:group_aff}, and therefore $f$ and $h$ is state trajectory independent from \Cref{thm:groupaff}.
\end{proof}

The next natural question is: under what conditions does the state trajectory independence of error imply the state trajectory independence of relative states, i.e., ETI $\overset{?}{\Rightarrow}$ RTI.

\begin{corollary}\label{cor:ETI_RTI}
    Suppose that the estimation error of \eqref{eqn:dot_gi} is state trajectory independent (ETI). 
    Further, assume that $\tilde X_t^i(g)$ in \eqref{eqn:dot_gi_Xtilde} does not depend on $t$ and $i$, i.e., we can effectively rewrite $\tilde X^i_t(g) = \tilde X(g)$ for any $g\in\G$.
    Then, the following statements hold.
    \renewcommand{\labelenumi}{\textit{(\roman{enumi})}}
    \begin{enumerate}
        \item The left-invariant relative state $g_{12}=g_1^{-1}g_2$ is state trajectory independent ($L$-RTI), if and only if $\xi_1(t) \equiv \xi_2(t)$. 
            In this case, the relative state is governed by
            \begin{align}\label{eqn:dot_g12_L_cor}
                \dot g_{12} = -\zeta_1 g_{12} + \tilde X(g_{12}) + g_{12}\zeta_2.
            \end{align}
        \item The right-invariant relative state $g_{12}=g_1g_2^{-1}$ is state trajectory independent ($R$-RTI), if and only if $\zeta_1(t) \equiv \zeta_2(t)$. 
            In this case, the relative state is governed by
            \begin{align}\label{eqn:dot_g12_R_cor}
                \dot g_{12} = \xi_1 g_{12} + \tilde X(g_{12}) - g_{12}\xi_2.
            \end{align}
    \end{enumerate}
\end{corollary}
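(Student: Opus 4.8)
The plan is to reduce each ``if and only if'' to a purely algebraic commutation condition by \emph{subtracting} the ETI identity from the RTI identity, both expressed through the intrinsic field $\tilde X$. Because the hypothesis forces $\tilde X^i_t=\tilde X$ to be common to both agents and time-independent, every term carrying $\tilde X$ cancels in the subtraction, isolating a constraint on $\xi_1-\xi_2$ (for $L$-RTI) or $\zeta_1-\zeta_2$ (for $R$-RTI). The governing equations claimed in each case will then drop out by specializing the already-derived formulas \eqref{eqn:dot_g12_L_tilde} and \eqref{eqn:dot_g12_R_tilde}.

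For part (i) I would start from the standing ETI assumption: by \Cref{thm:groupaff} the identity \eqref{eqn:group_aff_tilde}, namely $\tilde X(hg)=h\tilde X(g)+\tilde X(h)g$, holds for all $g,h\in\G$. By \Cref{thm:rel_sti_L}, $L$-RTI is equivalent to \eqref{eqn:cond_g12Ltilde_autonomous}, which—since $\tilde X$ no longer depends on the control—reads $\tilde X(hg)=h\tilde X(g)+\tilde X(h)g+h(\xi_2-\xi_1)g+(\xi_1-\xi_2)hg$. Subtracting \eqref{eqn:group_aff_tilde} leaves the residual constraint $h(\xi_2-\xi_1)g+(\xi_1-\xi_2)hg=0$ for all $g,h\in\G$. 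Writing $\Delta=\xi_1-\xi_2\in\g$ and right-multiplying by $g^{-1}$ (legitimate since $g$ is invertible) collapses this to $\Delta h=h\Delta$ for every $h\in\G$. The ``if'' direction is then immediate: $\xi_1\equiv\xi_2$ gives $\Delta=0$, the residual vanishes, and $L$-RTI holds; the governing equation \eqref{eqn:dot_g12_L_cor} follows by setting $\xi_1=\xi_2$ and $\tilde X(\cdot,u_2)=\tilde X(\cdot)$ in \eqref{eqn:dot_g12_L_tilde}.

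The main obstacle is the ``only if'' direction, which requires deducing $\Delta=0$ from $\Delta h=h\Delta$ for all $h\in\G$. I would handle this by specializing to $h=\exp(s\eta)$ and differentiating at $s=0$, which yields $[\eta,\Delta]=0$ for every $\eta\in\g$, i.e.\ $\Delta$ lies in the center of $\g$. Thus the equivalence holds exactly as stated whenever $\g$ has trivial center—as is the case for the groups of interest, e.g.\ $\sett$, where a direct check shows that the only Lie-algebra element commuting with all of $\G$ is $0$. (More sharply, $L$-RTI is equivalent to $\xi_1-\xi_2$ being central, and the corollary's form is recovered under trivial center.) I would make sure to state this hypothesis explicitly rather than assert $\Delta=0$ unconditionally.

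Part (ii) is the mirror image and needs no new idea. Using \Cref{thm:rel_sti_R}, subtracting \eqref{eqn:group_aff_tilde} from \eqref{eqn:cond_g12Rtilde_autonomous} gives $h(\zeta_1-\zeta_2)g+hg(\zeta_2-\zeta_1)=0$; left-multiplying by $h^{-1}$ and setting $\Delta'=\zeta_1-\zeta_2$ yields $\Delta' g=g\Delta'$ for all $g\in\G$, the identical commutation condition with $\zeta$ in place of $\xi$. The ``if'' direction and the centrality argument carry over verbatim, and \eqref{eqn:dot_g12_R_cor} follows by substituting $\zeta_1=\zeta_2$ and $\tilde X(\cdot,u_1)=\tilde X(\cdot)$ into \eqref{eqn:dot_g12_R_tilde}.
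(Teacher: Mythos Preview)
Your approach is essentially the same as the paper's: both subtract the ETI identity \eqref{eqn:group_aff_tilde} from the RTI condition \eqref{eqn:cond_g12Ltilde_autonomous} (resp.\ \eqref{eqn:cond_g12Rtilde_autonomous}) to isolate the commutation constraint $\Ad_h(\xi_2-\xi_1)=\xi_2-\xi_1$ (resp.\ with $\zeta$), and then read off the governing equations from \eqref{eqn:dot_g12_L_tilde} and \eqref{eqn:dot_g12_R_tilde}. Your treatment is in fact more careful than the paper's, which jumps directly from $\Ad_h\Delta=\Delta$ for all $h$ to $\Delta=0$ without comment; your observation that this step requires the center of $\g$ to be trivial (true for $\sett$) is a genuine refinement worth keeping.
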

\begin{proof}
    Since it is ETI, according to \Cref{thm:groupaff}, \eqref{eqn:group_aff_tilde} is satisfied. 
    But, as $\tilde X_t^i(g)=\tilde X(g)$, it reduces to
    \begin{align}
        \tilde X(hg) = h \tilde X(g) + \tilde X(h) g. \label{eqn:cor3_0}
    \end{align}

    First, if $\xi_1\equiv \xi_2$, then \eqref{eqn:cond_g12Ltilde_autonomous} is rearranged into \eqref{eqn:cor3_0}, and according to \Cref{thm:rel_sti_L}, \eqref{eqn:dot_gi} is $L$-RTI. 

    Next, suppose \eqref{eqn:dot_gi} is $L$-RTI. 
    Then, \eqref{eqn:cond_g12Ltilde_autonomous} should be satisfied from \Cref{thm:rel_sti_L}. 
    Substituting \eqref{eqn:cor3_0} into \eqref{eqn:cond_g12Ltilde_autonomous},
    \begin{align*}
        h(\xi_2-\xi_1)g + (\xi_1 - \xi_2) hg = 0.
    \end{align*}
    Setting $g=e$, this is rearranged into $h(\xi_2-\xi_1) = (\xi_2-\xi_1)h$, or $\Ad_h (\xi_2-\xi_1)=\xi_2-\xi_1$.
    Since this should be satisfied for any $h,g\in\G$, it follows that $\xi_1\equiv\xi_2$. 

    Finally, substituting $\xi_1\equiv\xi_2$ into \eqref{eqn:dot_g12_L_tilde} yields \eqref{eqn:dot_g12_L_cor}.
    These shows (i).

    Next, (ii) can be shown similarly using \Cref{thm:rel_sti_R}.
\end{proof}

Therefore, the state trajectory independence of the relative state can be verified from the state trajectory independence of the estimation error, provided that the assumptions of \Cref{cor:ETI_RTI} are satisfied. 
For the given IMU example, the results of \Cref{ex:L-RTI,ex:R-RTI} can be reconstructed using \Cref{cor:ETI_RTI} as follows. 

\begin{example}{(RTI)}
    Consider the system model from \Cref{ex:UAV1}.
    It has been shown in \Cref{ex:ETI} that its estimation error is state trajectory independent (ETI).
    Furthermore, for any $ t $, we have $\tilde X_t(g) = (0,0,v)$ where $ g = (R,v,x) $.
    Thus, \Cref{cor:ETI_RTI} can be applied to this system.

    Let $g_1 = (R_1, v_1, x_1)$ and $g_2 = (R_2, v_2, x_2)$ such that the left-invariant relative state $g_{12}=g_1^{-1} g_2$ is obtained by
    \begin{align*}
        g_{12} & = (R_1^T R_2, R_1^T(v_2 - v_1), R_1^T(x_2 -x_1)),
    \end{align*}
    which is defined to be $g_{12} = (R_{12}, v_{12}, x_{12})$. 
    Since $ \xi_1(t) \equiv \xi_2(t) = (0, \grav e_3, 0) $, the left-invariant relative state $g_{12}$ is state trajectory independent ($L$-RTI), and from \eqref{eqn:dot_g12_L_cor} it is governed by
    \begin{align*}
        \dot g_{12} & =  - (\hat\Omega_1, a_1, 0) (R_{12}, v_{12}, x_{12}) + (0, 0, v_{12})\\
                    & \quad + (R_{12}, v_{12}, x_{12}) (\hat\Omega_2, a_2, 0),
    \end{align*}
    which is identical to \eqref{eqn:dot_g12_L_ex}.

    On the other hand, $ \zeta_1(t) \neq \zeta_2(t) $ since $\zeta_i(t) = (\hat \Omega_i(t) , a_i(t), 0)$. Thus the right-invariant relative state $g_{12}=g_1g_2^{-1}$ is not state trajectory independent (i.e., not $R$-RTI).

\end{example}

\subsection{Estimation Error for Left-Invariant Relative State ($L$-RETI)}

Once the state trajectory independence of the relative state is verified, we may construct an invariant filter for the relative dynamics.  
Since there are two types of relative states, we first assume that the left-invariant relative state $g_{12}=g_1^{-1} g_2$ of \eqref{eqn:dot_gi} is state trajectory independent ($L$-RTI), i.e., it is governed by \eqref{eqn:dot_g12_L_tilde} according to \Cref{thm:rel_sti_L}.  
Then, as discussed above, an estimator of the relative motion can be constructed on $\G$ without the need to estimate $g_1$ or $g_2$ individually.

Let $\bar g_{12}$ be another trajectory of \eqref{eqn:dot_g12_L_tilde} on $\G$, driven by the same input $u_{12}$ but potentially from a different initial condition; that is, $\bar g_{12}$ represents an estimated relative trajectory.  
Specifically,
\begin{align}
    \dot {\bar g}_{12} = (\xi_2 - \xi_1 - \zeta_2)\bar g_{12} + \tilde X(\bar g_{12}, u_2) + \bar g_{12}\zeta_2. \label{eqn:dot_bar_g12_L}
\end{align}

Define the left-invariant estimation error $f_{12} = g_{12}^{-1}\bar g_{12}$ and the right-invariant estimation error $h_{12} = \bar g_{12} g_{12}^{-1}$.  
If either $f_{12}$ or $h_{12}$ is state trajectory independent according to \Cref{def:sti}, then the relative motion can be estimated by an invariant Kalman filter.  
The resulting state trajectory independence of the relative estimation error is denoted by RETI.  
To avoid confusion, the definitions of all acronyms are summarized in \Cref{fig:summary}.

In short, we aim to verify the state trajectory independence of the estimation error for the left-invariant relative trajectory ($L$-RETI), given that the relative trajectories themselves are state trajectory independent ($L$-RTI), i.e., $L$-RTI $\overset{?}{\Rightarrow}$ $L$-RETI.  
Interestingly, this implication holds automatically without any additional requirements, as discussed below.  

\begin{theorem}{($L$-RETI)}\label{thm:rel_err_L_sti}
    If the left-invariant relative state $g_{12}=g_1^{-1}g_2$ of \eqref{eqn:dot_gi} is state trajectory independent ($L$-RTI) as presented in \Cref{def:rel_sti}, then the estimation error $f_{12}=g_{12}^{-1}\bar g_{12}$ or $h_{12} = \bar g_{12}g_{12}^{-1}$ of the resulting equation \eqref{eqn:dot_g12_L_tilde} for the relative state is also state trajectory independent ($L$-RETI) as defined in \Cref{def:sti}.  
    The corresponding error dynamics are given by
    \begin{align}
        \dot f_{12} & = \tilde X(f_{12}, u_2) + f_{12}\zeta_2 - \zeta_2 f_{12}, \label{eqn:dot_f12_L}\\
        \dot h_{12} & = \tilde X(h_{12}, u_2) + \xi_{12} h_{12} - h_{12} \xi_{12}, \label{eqn:dot_h12_L}
    \end{align}
    where $\xi_{12}(t) = \xi_2(t) - \xi_1(t) - \zeta_1(t) \in \g$.
\end{theorem}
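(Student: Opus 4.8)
The plan is to recognize the governing equation \eqref{eqn:dot_g12_L_tilde} for the relative state as an instance of the single-system form \eqref{eqn:dot_g_Xtilde} on $\G$, now driven by the concatenated control $u_{12}$, and then to reuse the machinery already developed in \Cref{thm:groupaff} and \Cref{cor:dot_f_h} rather than redo any error-propagation computation. Concretely, I would read off the decomposition of \eqref{eqn:dot_g12_L_tilde} with left generator $\xi_{12} = \xi_2 - \xi_1 - \zeta_1$, intrinsic field $\tilde X(\cdot, u_2)$, and right generator $\zeta_2$. First I would confirm this is a legitimate decomposition in the sense of \Cref{lem:decomp}: the intrinsic field vanishes at the identity since $\tilde X(e, u_2) = 0$, and evaluating \eqref{eqn:dot_g12_L_tilde} at $g_{12} = e$ gives $\xi_{12} + \zeta_2$, so the identity value is correctly split between the two generators.

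The key step is to verify that the intrinsic field $\tilde X(\cdot, u_2)$ of this single system satisfies the group-affine condition \eqref{eqn:group_aff_tilde}, i.e. $\tilde X(hg, u_2) = h\tilde X(g, u_2) + \tilde X(h, u_2)g$. This is exactly where the $L$-RTI hypothesis enters. By \Cref{thm:rel_sti_L}, $L$-RTI is equivalent to \eqref{eqn:cond_g12Ltilde_autonomous} holding for all control values; specializing to $u_1 = u_2$ collapses the residual terms $h(\xi_2 - \xi_1)g + (\xi_1 - \xi_2)hg$ to zero and leaves precisely \eqref{eqn:group_aff_tilde} for the control $u_2$. Equivalently, this is the content of \Cref{cor:RTI_ETI}: $L$-RTI forces the underlying single-control system to be ETI, so $\tilde X(\cdot, u)$ is group-affine for every $u$, in particular for $u = u_2$.

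With \eqref{eqn:group_aff_tilde} established for the intrinsic field, I would apply \Cref{thm:groupaff} to the relative dynamics viewed as \eqref{eqn:dot_g_Xtilde}: statement (iii) holds, hence so does (i), meaning both $f_{12} = g_{12}^{-1}\bar g_{12}$ and $h_{12} = \bar g_{12}g_{12}^{-1}$ are state trajectory independent, which is the claimed $L$-RETI property. The explicit error equations then follow by specializing \Cref{cor:dot_f_h} to this decomposition: \eqref{eqn:dot_f_1} with $\tilde X_t \mapsto \tilde X(\cdot, u_2)$ and $\zeta \mapsto \zeta_2$ gives \eqref{eqn:dot_f12_L}, while \eqref{eqn:dot_h_2} with $\tilde X_t \mapsto \tilde X(\cdot, u_2)$ and $\xi \mapsto \xi_{12}$ gives \eqref{eqn:dot_h12_L}.

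The step I expect to require the most care is the second one: justifying that the $L$-RTI condition \eqref{eqn:cond_g12Ltilde_autonomous}, which genuinely couples $u_1$ and $u_2$, may be specialized to $u_1 = u_2$ so that the $\xi$-dependent residual cancels. This hinges on reading RTI as a property of the vector field $X$ that holds for all admissible control pairs, not merely for the particular driving pair; once that is granted the cancellation is immediate. A secondary point worth making explicit is that \Cref{cor:dot_f_h} applies verbatim because \eqref{eqn:dot_g12_L_tilde} is a bona fide single system on $\G$, with time dependence entering only through $\xi_i(t)$, $\zeta_i(t)$, and $u_2(t)$, so the conclusion reduces to matching the components of the decomposition.
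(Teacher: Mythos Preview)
Your proof is correct and shares the paper's overall architecture: view \eqref{eqn:dot_g12_L_tilde} as a single system in the form \eqref{eqn:dot_g_Xtilde} with the triple $(\xi_{12},\tilde X(\cdot,u_2),\zeta_2)$, verify the group-affine property \eqref{eqn:group_aff_tilde} for the intrinsic field, then invoke \Cref{thm:groupaff}(iii) and read off the error equations from \Cref{cor:dot_f_h}. The only difference is in how the intermediate identity $\tilde X(hg,u_2)=h\tilde X(g,u_2)+\tilde X(h,u_2)g$ is obtained. You specialize the RTI condition to $u_1=u_2$ (equivalently, you route through \Cref{cor:RTI_ETI}), which is clean and conceptual but, as you note, relies on reading \eqref{eqn:cond_g12Ltilde_autonomous} as a structural property valid for all control pairs. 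The paper instead stays with the given two-control condition: it sets $g=e$ in \eqref{eqn:cond_g12Ltilde_autonomous} to express $\tilde X(h,u_2)g$ in terms of $\tilde X(h,u_1)g$ plus the $\xi$-residual, and substitutes this back so that the residual cancels algebraically without ever altering the controls. The paper's manipulation is thus slightly more self-contained (no appeal to the ``for all $u_1,u_2$'' reading), while yours makes the logical dependence on \Cref{cor:RTI_ETI} explicit and arguably clarifies why RTI is the right hypothesis.
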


\begin{proof}
    Since $g_{12}=g_1^{-1} g_2$ is $L$-RTI, according to \Cref{thm:rel_sti_L}, \eqref{eqn:cond_g12Ltilde_autonomous} holds.  
    Setting $g=e$ in \eqref{eqn:cond_g12Ltilde_autonomous} gives
    \begin{align*}
        \tilde X(h,u_2) = h\tilde X(e, u_2) + \tilde X(h, u_1) + h(\xi_2-\xi_1) + (\xi_1-\xi_2)h.
    \end{align*}
    Substituting $\tilde X(e, u_2)=0$ and multiplying both sides by $g$ on the right yields
    \begin{align*}
        \tilde X(h,u_2)g = \tilde X(h, u_1)g + h(\xi_2-\xi_1)g + (\xi_1-\xi_2)hg.
    \end{align*}
    Substituting this back into \eqref{eqn:cond_g12Ltilde_autonomous} leads to
    \begin{align}
        \tilde X(hg, u_2) = h\tilde X(g, u_2) + \tilde X(h, u_2)g.\label{eqn:thm4_0}
    \end{align}
    According to (iii) of \Cref{thm:groupaff}, \eqref{eqn:dot_g12_L_tilde} is ETI if and only if \eqref{eqn:group_aff_tilde} holds for $\tilde X(g_{12}, u_2)$, which is precisely \eqref{eqn:thm4_0}.  
    Therefore, $f_{12}$ and $h_{12}$ are ETI, corresponding to the $L$-RETI property.  
    Equations \eqref{eqn:dot_f12_L} and \eqref{eqn:dot_h12_L} then follow directly from \Cref{cor:dot_f_h}.
\end{proof}

In the above proof, the equivalent condition \eqref{eqn:group_aff_tilde} for state trajectory independence of the error in the decomposed form was instrumental.  
Because the relative-state dynamics are already expressed in a decomposed form as in \eqref{eqn:dot_g12_L_tilde}, ETI could be verified without explicitly considering the first and third terms of that equation.  

\Cref{thm:rel_err_L_sti} is particularly useful in practice: once the evolution of the relative state is shown to be self-contained, the relative state can be estimated by an invariant Kalman filter without any additional assumptions.  

\begin{example}{($L$-RETI)}
    The left-invariant relative state $g_{12}=g_1^{-1}g_2$ of the system in \Cref{ex:UAV1} is state trajectory independent ($L$-RTI), as shown in \Cref{ex:L-RTI}.  
    From \Cref{thm:rel_err_L_sti}, the estimation error of this relative state is also state trajectory independent.

    Using \eqref{eqn:dot_f12_L} with $f_{12} = (Q, u, y)$, the left-invariant estimation error evolves according to
    \begin{align*}
        \dot f_{12} & = (0, 0, u) + (Q, u, y)(\hat \Omega_2, a_2, 0)
                    - (\hat \Omega_2, a_2, 0)(Q, u, y) \\
                    & = (Q\hat\Omega_2 - \hat\Omega_2 Q,\; Qa_2 - \hat\Omega_2 u - a_2,\; u - \hat\Omega_2 y).
    \end{align*}

    Similarly, from \eqref{eqn:dot_h12_L}, the right-invariant estimation error $h_{12} = (T, p, q)$ satisfies
    \begin{align*}
        \dot h_{12} & = (0, 0, p) + (T, p, q)(\hat \Omega_1, a_1, 0)
                    - (\hat \Omega_1, a_1, 0)(T, p, q) \\
                    & = (T\hat\Omega_1 - \hat\Omega_1 T,\; T a_1 - \hat\Omega_1 p - a_1,\; p - \hat\Omega_1 q),
    \end{align*}
    since $\xi_{12} = \xi_2 - \xi_1 - \zeta_1 = -\zeta_1$ in this example.
\end{example}

\subsection{Estimation Error for Right-Invariant Relative State ($R$-RETI)}

We can derive similar results for the right-invariant relative state $g_{12}=g_1g_2^{-1}$.  
Suppose that it is state trajectory independent ($R$-RTI), i.e., it is governed by \eqref{eqn:dot_g12_R_tilde} according to \Cref{thm:rel_sti_R}.  
Let $\bar g_{12}$ be another trajectory of \eqref{eqn:dot_g12_R_tilde} on $\G$, driven by the same input $u_{12}$ but potentially from a different initial condition; that is, $\bar g_{12}$ represents an estimated relative trajectory.  
Specifically, we have
\begin{align}
    \dot{\bar g}_{12} = \xi_1 \bar g_{12} + \tilde X(\bar g_{12}, u_1) + \bar g_{12}(\zeta_1 - \zeta_2 - \xi_2). \label{eqn:dot_bar_g12_R}
\end{align}
Then, we can conclude that the corresponding estimation error is state trajectory independent.  

\begin{theorem}{($R$-RETI)}\label{thm:rel_err_R_sti}
    If the right-invariant relative state $g_{12}=g_1g_2^{-1}$ of \eqref{eqn:dot_gi} is state trajectory independent ($R$-RTI) as presented in \Cref{def:rel_sti}, then the estimation error $f_{12}=g_{12}^{-1}\bar g_{12}$ or $h_{12}=\bar g_{12}g_{12}^{-1}$ of the resulting equation \eqref{eqn:dot_g12_R_tilde} for the relative state is also state trajectory independent ($R$-RETI) as defined in \Cref{def:sti}.  
    The corresponding error dynamics are given by
    \begin{align}
        \dot f_{12} & = \tilde X(f_{12}, u_1) + f_{12}\zeta_{12} - \zeta_{12} f_{12}, \label{eqn:dot_f12_R}\\
        \dot h_{12} & = \tilde X(h_{12}, u_1) + \xi_1 h_{12} - h_{12}\xi_1, \label{eqn:dot_h12_R}
    \end{align}
    where $\zeta_{12}(t) = \zeta_1 - \zeta_2 - \xi_2 \in \g$.
\end{theorem}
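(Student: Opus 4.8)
The plan is to mirror the proof of \Cref{thm:rel_err_L_sti}, exploiting the fact that the $R$-RTI dynamics \eqref{eqn:dot_g12_R_tilde} are already written in the decomposed form of \eqref{eqn:dot_g_Xtilde}, with left generator $\xi_1$, right generator $\zeta_{12}=\zeta_1-\zeta_2-\xi_2$, and intrinsic field $\tilde X(\cdot,u_1)$. By \Cref{thm:groupaff}, the estimation error of \eqref{eqn:dot_g12_R_tilde} is state trajectory independent precisely when this intrinsic field obeys \eqref{eqn:group_aff_tilde}, i.e.\ $\tilde X(hg,u_1)=h\tilde X(g,u_1)+\tilde X(h,u_1)g$. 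Thus the whole task reduces to deriving this single identity from the $R$-RTI hypothesis, after which \Cref{cor:dot_f_h} supplies the explicit error equations.

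First I would invoke \Cref{thm:rel_sti_R}: since $g_{12}=g_1 g_2^{-1}$ is $R$-RTI, condition \eqref{eqn:cond_g12Rtilde_autonomous} holds for all $h,g\in\G$. The crucial step---and the point where this argument departs from the $L$-RETI proof---is the choice of degeneration. Whereas \Cref{thm:rel_err_L_sti} sets $g=e$ and right-multiplies, the asymmetric placement of the index $u_1$ on the left-hand side of \eqref{eqn:cond_g12Rtilde_autonomous} forces the dual move: I would set $h=e$ and use $\tilde X(e,u_1)=0$ to obtain a relation expressing $\tilde X(g,u_2)$ through $\tilde X(g,u_1)$ and the drift terms, namely $\tilde X(g,u_2)=\tilde X(g,u_1)-(\zeta_1-\zeta_2)g-g(\zeta_2-\zeta_1)$.

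Next I would left-multiply that relation by $h$ and substitute the resulting expression for $h\tilde X(g,u_2)$ back into \eqref{eqn:cond_g12Rtilde_autonomous}. The two families of drift terms then cancel in pairs, collapsing \eqref{eqn:cond_g12Rtilde_autonomous} into exactly $\tilde X(hg,u_1)=h\tilde X(g,u_1)+\tilde X(h,u_1)g$. By (iii) of \Cref{thm:groupaff} applied to \eqref{eqn:dot_g12_R_tilde} with intrinsic field $\tilde X(\cdot,u_1)$, both $f_{12}$ and $h_{12}$ are then ETI, which is the claimed $R$-RETI property. Finally, reading off \eqref{eqn:dot_f_1} and \eqref{eqn:dot_h_2} from \Cref{cor:dot_f_h} with the identifications $\zeta=\zeta_{12}$ and $\xi=\xi_1$ yields \eqref{eqn:dot_f12_R} and \eqref{eqn:dot_h12_R} directly.

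The main obstacle is conceptual rather than computational: recognizing that the correct reduction uses $h=e$ and left-multiplication, dual to the $L$-RETI case. Because the $R$-RTI condition is not the naive mirror image of the $L$-RTI condition---the index $u_1$ sits on the left-hand side and the drift blocks carry $\zeta_1-\zeta_2$ rather than $\xi_2-\xi_1$---choosing the wrong degeneration ($g=e$) produces only a vacuous identity and fails to isolate the $u_1$-versus-$u_2$ discrepancy that must be absorbed into the drift. Once the right substitution is identified, the remaining cancellations are routine.
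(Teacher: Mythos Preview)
Your proposal is correct and follows essentially the same route as the paper's proof: invoke \eqref{eqn:cond_g12Rtilde_autonomous} from \Cref{thm:rel_sti_R}, specialize to $h=e$ with $\tilde X(e,u_1)=0$, left-multiply by $h$, substitute back so the drift terms cancel and \eqref{eqn:group_aff_tilde} for $\tilde X(\cdot,u_1)$ emerges, then conclude via \Cref{thm:groupaff}(iii) and \Cref{cor:dot_f_h}. Your observation that the correct degeneration here is $h=e$ with left-multiplication, dual to the $g=e$ with right-multiplication used in \Cref{thm:rel_err_L_sti}, is exactly the point.
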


\begin{proof}
    Since $g_{12}=g_1 g_2^{-1}$ is $R$-RTI, according to \Cref{thm:rel_sti_R}, \eqref{eqn:cond_g12Rtilde_autonomous} is satisfied.  
    Setting $h=e$ in \eqref{eqn:cond_g12Rtilde_autonomous} yields
    \begin{align*}
        \tilde X(g,u_1) = \tilde X(g,u_2) + \tilde X(e,u_1)g + (\zeta_1-\zeta_2)g + g(\zeta_2-\zeta_1).
    \end{align*}
    Substituting $\tilde X(e,u_1)=0$ and multiplying both sides by $h$ on the left gives
    \begin{align*}
        h\tilde X(g,u_1) = h\tilde X(g,u_2) + h(\zeta_1-\zeta_2)g + hg(\zeta_2-\zeta_1).
    \end{align*}
    Substituting this back into \eqref{eqn:cond_g12Rtilde_autonomous} leads to
    \begin{align}
        \tilde X(hg,u_1) = h\tilde X(g,u_1) + \tilde X(h,u_1)g. \label{eqn:thm5_0}
    \end{align}
    By (iii) of \Cref{thm:groupaff}, \eqref{eqn:dot_g12_R_tilde} is ETI if and only if \eqref{eqn:group_aff_tilde} holds for $\tilde X(g_{12},u_1)$, which corresponds exactly to \eqref{eqn:thm5_0}.  
    Therefore, $f_{12}$ and $h_{12}$ are ETI.  
    The dynamics in \eqref{eqn:dot_f12_R} and \eqref{eqn:dot_h12_R} then follow directly from \Cref{cor:dot_f_h}.
\end{proof}

\begin{figure}
    \footnotesize
    \tikzstyle{block} = [draw, fill=white, rectangle, minimum height=7.2em, minimum width=6em, text width=3.5cm]
    \begin{tikzpicture}
        \node[block, thick] (thm1) at (4.3,0) {\textbf{\Cref{thm:groupaff} $\Leftrightarrow$ \Cref{cor:dot_f_h} (ETI)}\\Estimation error $f=g^{-1}\bar g$ or $h=\bar g g^{-1}$ is state trajectory independent};
        \node[block] (thm2) at (0,-2.6) {\textbf{\Cref{thm:rel_sti_L} ($L$-RTI)}\\Left-invariant relative state $g_{12}=g_1^{-1}g_2$ is state trajectory independent};
        \node[block] (thm3) at (0,-4.7) {\textbf{\Cref{thm:rel_sti_R} ($R$-RTI)}\\Right-invariant relative state $g_{12}=g_1g_2^{-1}$ is state trajectory independent};
        \node[block, thick] (thm4) at (4.3,-2.6) {\textbf{\Cref{thm:rel_err_L_sti} ($L$-RETI)}\\Estimation error $f_{12}=g_{12}^{-1}\bar g_{12}$ or $h_{12}=\bar g_{12}g_{12}^{-1}$ of the left-invariant relative state $g_{12}=g_1^{-1}g_2$ is state trajectory independent};
        \node[block, thick] (thm5) at (thm3-|thm4) {\textbf{\Cref{thm:rel_err_R_sti} ($R$-RETI)}\\Estimation error $f_{12}=g_{12}^{-1}\bar g_{12}$ or $h_{12}=\bar g_{12}g_{12}^{-1}$ of the right-invariant relative state $g_{12}=g_1g_2^{-1}$ is state trajectory independent};
        \node[left=0.4cm of thm2.north] (tmp1) {};
        \node[above=0.4cm of thm1.west] (thm11) {};
        \draw[double, dotted, <-] (tmp1) -- (tmp1|-thm11) -- (thm11) node[pos=0.5,above] {\textbf{\Cref{cor:ETI_RTI}}};
        \node[right=0.4cm of thm2.north] (tmp2) {};
        \node[below=0.4cm of thm1.west] (thm12) {};
        \draw[double, ->] (tmp2) -- (tmp2|-thm12) -- (thm12) node[pos=0.5,above] {\textbf{\Cref{cor:RTI_ETI}}};
        \draw[double, ->] (thm2) -- (thm4) {};
        \draw[double, ->] (thm3) -- (thm5) {};
        \begin{scope}[on background layer]
            \node[inner sep=4pt, fill=gray!20, draw, fit=(thm2) (thm3)] (thm23) {};
        \end{scope}
        \begin{scope}[on background layer]
            \node[inner sep=4pt, fill=gray!20, draw, fit=(thm4) (thm5), densely dashdotted] (thm45) {};
        \end{scope}
        \node[rotate=90] (label1) at (-2.2,0) {Single trajectory};
        \node[rotate=90] at (label1|-thm23) {Relative trajectory};
        \node at (thm23.south) [below=1pt] {Relative state};
        \node at (thm45.south) [below=1pt] {Estimation error};
    \end{tikzpicture}
    \caption{Summary of the results for state trajectory independence.  
    The top row corresponds to a single trajectory, while the two shaded rows represent relative trajectories.  
    The left column depicts the relative states, and the right column corresponds to their estimation errors.  
    Thick lines indicate state trajectory independence of estimation errors.  
    A solid double arrow denotes a logical implication, whereas a dotted arrow indicates that additional conditions are required for the implication.}
    \label{fig:summary}
\end{figure}

All results on state trajectory independence are summarized in \Cref{fig:summary}.  
Our objective is to ensure the state trajectory independence of the estimation error for the relative state, i.e., $L$-RETI or $R$-RETI (the dash-dotted box in \Cref{fig:summary}), enabling the construction of an invariant filter for the relative states.  
The diagrams in \Cref{fig:summary} illustrate two possible pathways to achieve this.  
The state trajectory independence of the relative state can first be verified using \Cref{thm:rel_sti_L,thm:rel_sti_R}, and the resulting implications then automatically lead to the state trajectory independence of its estimation error.  
Alternatively, if the estimation error of a single trajectory is state trajectory independent, the additional conditions in \Cref{cor:ETI_RTI} allow us to follow the same reasoning.  
The latter approach is often more convenient, as it circumvents the explicit verification conditions required by \Cref{thm:rel_sti_L,thm:rel_sti_R}.

\section{Invariant Filter for Relative Dynamics}\label{sec:IKFRM}

In this section, we construct invariant filters for the relative dynamics, assuming that the state trajectory independence of the estimation error is satisfied as presented in \Cref{sec:STIRM}.  
To illustrate the development of the relative invariant filter clearly, we focus on a specific case in which the assumptions of \Cref{cor:ETI_RTI} hold for the state trajectory independence of the left-invariant relative state ($L$-RETI).  
However, we now incorporate process noise, which was excluded in \Cref{sec:STIRM}.  

\subsection{Relative Dynamics}

Suppose that the dynamics of two systems are governed by
\begin{align}
    \dot g_i = \xi_i(t) g_i + \tilde X(g_i) + g_i \zeta_i(t) + g_i \hat w_i, \label{eqn:dot_gi_w}
\end{align}
where $w_i(t)\sim \mathcal{N}(0,\Sigma_i)$ is an $n$-dimensional Gaussian white noise process with covariance $\Sigma_i\in\Re^{n\times n}$ for $i\in\{1,2\}$, which is mapped to $\g$ via the hat operator.  
The noise processes $w_1$ and $w_2$ are assumed to be independent.  

\begin{assumption}\label{assump:1}
    The terms $\tilde X$ and $\xi_i$ in \eqref{eqn:dot_gi_w} satisfy the following properties:
    \begin{enumerate}
        \renewcommand{\labelenumi}{(\roman{enumi})}
        \item $\tilde X(hg) = h\tilde X(g) + \tilde X(h) g$ for any $g,h\in\G$.
        \item $\xi_1(t)=\xi_2(t)$ for all $t$.
    \end{enumerate}
\end{assumption}

Under this formulation of the dynamics and the above assumptions, the results of \Cref{sec:STIRM} apply directly, ensuring the feasibility of constructing invariant filters for the relative dynamics.  
\begin{itemize}
    \item From \Cref{thm:groupaff}, property (i) implies that the error state is state trajectory independent (ETI).
    \item Combined with (ii), \Cref{cor:ETI_RTI} guarantees that the left-invariant relative state $g_{12}=g_1^{-1}g_2$ is state trajectory independent ($L$-RTI). 
    \item Finally, by \Cref{thm:rel_err_L_sti}, the corresponding estimation errors $f_{12}=g_{12}^{-1}\bar g_{12}$ and $h_{12}=\bar g_{12} g_{12}^{-1}$ are state trajectory independent ($L$-RETI). 
\end{itemize}

Since these results were originally derived in the noise-free setting, we now generalize the equations governing the relative state and its estimation error to include additive noise.  
Specifically, \eqref{eqn:dot_g12_L_cor} is augmented with noise terms as
\begin{align}
    \dot g = -(\zeta_1 + \hat w_1) g + \tilde X(g) + g (\zeta_2 + \hat w_2), \label{eqn:dot_g_L_cor}
\end{align}
where, for brevity, we omit the subscript $12$ (i.e., $g=g_{12}$).  
Throughout this section, we similarly drop the subscript $12$ from $g$, $\bar g$, $f$, and $h$.

The estimated relative trajectory $\bar g$ is propagated by \eqref{eqn:dot_g_L_cor} in the absence of noise:
\begin{align}
    \dot{\bar g} = -\zeta_1 \bar g + \tilde X(\bar g) + \bar g \zeta_2. \label{eqn:dot_g_bar_L_cor}
\end{align}
The resulting left-invariant estimation error $f=g^{-1}\bar g$ and right-invariant estimation error $h=\bar g g^{-1}$ evolve according to
\begin{align}
    \dot f & = \tilde X(f) + f\zeta_2 - (\zeta_2+\hat w_2)f + f(\Ad_{\bar g^{-1}}\hat w_1), \label{eqn:dot_f_L}\\
    \dot h & = \tilde X(h) + h(\zeta_1+\hat w_1) - \zeta_1 h - (\Ad_{\bar g}\hat w_2)h, \label{eqn:dot_h_L}
\end{align}
which generalize \eqref{eqn:dot_f12_L} and \eqref{eqn:dot_h12_L} by incorporating the noise terms.  
Note that the state trajectory independence of $f$ is violated by $w_1$ through the last term of \eqref{eqn:dot_f_L}, which depends on $\bar g$.  
Similarly, the state trajectory independence of $h$ is affected by $w_2$.  
Conversely, $w_2$ and $w_1$ do not influence the independence of $f$ and $h$, respectively.

\begin{example}\label{ex:rel_noise}
    The vehicle kinematics model in \Cref{ex:UAV1} is extended with noise as follows.  
    For each $i\in\{1,2\}$,
    \begin{align}
        \dot x_i & = v_i,\\
        \dot v_i & = R_i(a_i - w^a_i) + \grav e_3,\\
        \dot R_i & = R_i(\Omega_i - w^g_i)^\wedge,
    \end{align}
    where $w^a_i \sim \mathcal{N}(0, \Sigma_i^a)$ and $w^g_i\sim\mathcal{N}(0, \Sigma_i^g)$ are the additive accelerometer and gyroscope noise, respectively, with $\Sigma_i^a, \Sigma_i^g\in\Re^{3\times 3}$.  

    This system follows the form of \eqref{eqn:dot_gi_w}, with $\tilde X(g_i)=(0,0,v_i)$, $\xi_1=\xi_2=(0,\grav e_3,0)$, and $\zeta_i=(\Omega_i,a_i,0)$.  
    The noise terms are $w_i=(-w^g_i,-w^a_i,0)$ and $\Sigma_i=\mathrm{diag}(\Sigma_i^g,\Sigma_i^a,0_{3\times3})\in\Re^{9\times9}$.  
    The properties in \Cref{assump:1} are trivially satisfied, allowing the construction of an invariant filter for the left-invariant relative state $g_{12}=g_1^{-1}g_2=(R_1^T R_2, R_1^T(v_2-v_1), R_1^T(x_2-x_1))\triangleq(R,v,x)$.  
    The components of $g_{12}=(R,v,x)$ have the following interpretations:
    \begin{itemize}
        \item $R\in\SO3$: rotation matrix transforming a vector from the second body-fixed frame $\mathcal{B}_2$ to the first frame $\mathcal{B}_1$, representing the orientation of $\mathcal{B}_2$ relative to $\mathcal{B}_1$. 
        \item $v\in\Re^3$: velocity of the second body relative to the first, expressed in $\mathcal{B}_1$.
        \item $x\in\Re^3$: position of the second body relative to the first, expressed in $\mathcal{B}_1$.
    \end{itemize}

    From \eqref{eqn:dot_g_L_cor}, the relative dynamics are given by
    \begin{align}
        \dot x & = -(\Omega_1 - w_1^g)^\wedge x + v,\\
        \dot v & = -(\Omega_1 - w_1^g)^\wedge v - (a_1 - w_1^a) + R(a_2 - w_2^a), \\
        \dot R & = -(\Omega_1 - w_1^g)^\wedge R + R(\Omega_2 - w_2^g)^\wedge.
    \end{align}
    The estimated state $\bar g=(\bar R,\bar v,\bar x)$ is propagated by these equations with $w_1^g=w_1^a=w_2^g=w_2^a=0$.  

    The left-invariant error is $f=(R^T\bar R, R^T(\bar v-v), R^T(\bar x-x))\triangleq(Q,u,y)$.  
    Since $\tilde X(f)=(0,0,u)$ and $\zeta_i=(\hat\Omega_i,a_i,0)$, from \eqref{eqn:dot_f_L}, the left-invariant estimation error evolves as
    \begin{align}
        \dot y & = u - (\Omega_2 - w_2^g)^\wedge y - Q\bar R^T(w_1^g)^\wedge\bar x, \label{eqn:dot_y}\\
        \dot u & = Q a_2 - (\Omega_2 - w_2^g)^\wedge u - (a_2 - w_2^a) \nonumber\\
               & \quad - Q\bar R^T\big((w_1^g)^\wedge \bar v + w_1^a\big), \label{eqn:dot_u}\\
        \dot Q & = Q\hat\Omega_2 - (\Omega_2 - w_2^g)^\wedge Q - Q(\bar R^T w_1^g)^\wedge. \label{eqn:dot_Q}
    \end{align}

    Likewise, the right-invariant error $h=(\bar R R^T, -\bar R R^T v + \bar v, -\bar R R^T x + \bar x)\triangleq(T,p,q)$ evolves as
    \begin{align}
        \dot q & = p - \hat\Omega_1 q + (\bar R w_2^g)^\wedge(q - \bar x), \label{eqn:dot_q}\\
        \dot p & = T(a_1 - w_1^a) - \hat\Omega_1 p - a_1 \nonumber\\
               & \quad + (\bar R w_2^g)^\wedge(p - \bar v) + \bar R w_2^a, \label{eqn:dot_p}\\
        \dot T & = T(\Omega_1 - w_1^g)^\wedge - \hat\Omega_1 T + (\bar R w_2^g)^\wedge T. \label{eqn:dot_T}
    \end{align}
\end{example}

\subsection{Propagation with the Left-Invariant Error $f$}\label{sec:prop_L}

Next, we develop an invariant extended Kalman filter, which consists of two steps: propagation and correction.  
Note that the filter can be formulated either using \eqref{eqn:dot_f_L} with $f$, or using \eqref{eqn:dot_h_L} with $h$.  
We first consider the propagation step for the left-invariant estimation error $f$.

From the definition of the estimation error, the actual relative state is given by
\begin{align*}
    g = \bar g f^{-1}.
\end{align*}
The objective of the Bayesian filter is to construct a statistical model for the distribution of $g$.  
Specifically, the key idea of an extended Kalman filter is that the distribution of $g$ is highly concentrated around the current estimate $\bar g$, such that the error $f$ remains close to the identity.  
In this case, $\bar g$ is treated as the mean state,\footnote{The notion of a mean in Euclidean space does not trivially generalize to a Lie group and requires additional definitions. Here, the mean group element is interpreted approximately as the mode, i.e., the element with the highest probability density.}  
and the error is approximated as $f = \exp(\vartheta) = e + \vartheta + \mathcal{O}(\|\vartheta\|^2)$ for $\vartheta \in \g$.  

Since $\g$ is isomorphic to a Euclidean space through the hat (or vee) map, we assume that $\vartheta$ follows a Gaussian distribution, $\vartheta^\vee \sim \mathcal{N}(0, P)$, with covariance $P \in \Re^{n \times n}$ and $n = \dim \G$.  
Thus, the resulting stochastic model of $g$ based on the left-invariant error $f$ is
\begin{align}
    g = \bar g \exp^{-1}(\vartheta) \approx \bar g (e - \vartheta),
\end{align}
with $\vartheta^\vee \sim \mathcal{N}(0, P)$.  
This is denoted by $g \sim \mathcal{N}_L(\bar g, P)$.

Let the time domain be discretized by a sequence $\{t_0, t_1, \ldots\}$.  
The value of a variable at $t = t_k$ is denoted by a subscript $k$.  
Given $g_k \sim \mathcal{N}_L(\bar g_k, P_k)$, the goal of the propagation step is to determine $(\bar g_{k+1}, P_{k+1})$ such that $g_{k+1} \sim \mathcal{N}_L(\bar g_{k+1}, P_{k+1})$.  
Although there is no guarantee that $g_{k+1}$ follows the same distribution form, we enforce this assumption via the small-perturbation approximation.

The mean is propagated by \eqref{eqn:dot_g_bar_L_cor}, and the covariance is propagated by linearizing \eqref{eqn:dot_f_L} about $f = e$.  
Substituting $f = e + \vartheta + \mathcal{O}(\|\vartheta\|^2)$ into \eqref{eqn:dot_f_L} and expanding in $\vartheta$, while neglecting higher-order terms in $(\vartheta, w_1, w_2)$, we obtain
\begin{align}
    \dot \vartheta = D\tilde X(e) \cdot \vartheta + \ad_\vartheta \zeta_2 - \hat w_2 + \Ad_{\bar g^{-1}} \hat w_1, \label{eqn:dot_vartheta_0}
\end{align}
where $D\tilde X(e) \in \g^*$ denotes the derivative of $\tilde X(f)$ with respect to $f$ evaluated at $f = e$.  
Since the first two terms are linear in $\vartheta$, there exists a matrix representation $A : \g \rightarrow \Re^{n \times n}$ such that
\begin{align}
    (D\tilde X(e) \cdot \vartheta + \ad_\vartheta \zeta_2)^\vee = A(\zeta_2)\, \vartheta^\vee. \label{eqn:A}
\end{align}
Similarly, we define $G : \G \rightarrow \Re^{n \times n}$ such that
\begin{align}
    (\Ad_{\bar g^{-1}} \hat w_1)^\vee = G(\bar g^{-1}) w_1. \label{eqn:G}
\end{align}
Substituting these into \eqref{eqn:dot_vartheta_0} yields the linearized dynamics
\begin{align}
    \dot \vartheta^\vee = A(\zeta_2)\, \vartheta^\vee - w_2 + G(\bar g^{-1}) w_1.
\end{align}
Accordingly, the covariance propagates as
\begin{align}
    \dot P = A(\zeta_2)P + P A^T(\zeta_2) + \Sigma_2 + G(\bar g^{-1}) \Sigma_1 G^T(\bar g^{-1}). \label{eqn:dot_P}
\end{align}

In summary, the propagation step is completed by integrating \eqref{eqn:dot_g_bar_L_cor} and \eqref{eqn:dot_P} to obtain $\mathcal{N}_L(\bar g_{k+1}, P_{k+1})$.  
An important property of the invariant filter is that the error dynamics \eqref{eqn:dot_f_L} depend on $\bar g$ only through the last term, ensuring that the linearization is not further degraded by estimation errors.  
In \eqref{eqn:dot_P}, this is reflected in the fact that the matrix $A$ does not depend on $\bar g$, i.e., it provides an exact linearization for any estimated state $\bar g$.  

\begin{example}\label{ex:9}
    Consider the dynamics of the left-invariant estimation error given by \eqref{eqn:dot_y}, \eqref{eqn:dot_u}, and \eqref{eqn:dot_Q}.  
    Let $\vartheta = (\vartheta_Q, \vartheta_u, \vartheta_y)^\wedge \in \sett$ with $\vartheta_Q, \vartheta_u, \vartheta_y \in \Re^3$.  
    Since $\tilde X(f) = (0, 0, u)$ for $f = (Q, u, y)$ and $\zeta_2 = (\Omega_2, a_2, 0)$, we have
    \begin{align*}
        D\tilde X(e) \cdot \vartheta + \ad_{\vartheta}\zeta_2 
            &= (0, 0, \vartheta_u)
             - (\hat\Omega_2 \vartheta_Q,\, \hat a_2 \vartheta_Q + \hat\Omega_2 \vartheta_u,\, \hat\Omega_2 \vartheta_y),
    \end{align*}
    which can be rearranged as $A(\zeta_2)\, \vartheta^\vee$, where $A : \sett \rightarrow \Re^{9 \times 9}$ is given by
    \begin{align}
        A(\zeta_2) = 
        \begin{bmatrix}
            -\hat\Omega_2 & 0 & 0 \\
            -\hat a_2 & -\hat\Omega_2 & 0 \\
            0 & I_{3\times3} & -\hat\Omega_2
        \end{bmatrix}. \label{eqn:A_ex}
    \end{align}
    Similarly, since $\bar g = (\bar R, \bar v, \bar x)$ and $w_1 = (-w_1^g, -w_1^a, 0)$, we obtain
    \begin{align*}
        \Ad_{\bar g^{-1}} \hat w_1 
            = (-\bar R^T w_1^g,\; -\bar R^T((w_1^g)^\wedge \bar v + w_1^a),\; -\bar R^T (w_1^g)^\wedge \bar x),
    \end{align*}
    which can be expressed as $G(\bar g^{-1}) w_1$, where $G : \SEtt \rightarrow \Re^{9 \times 9}$ is
    \begin{align}
        G(\bar g^{-1}) = 
        \begin{bmatrix}
            \bar R^T & 0 & 0 \\
            -\bar R^T \hat{\bar v} & \bar R^T & 0 \\
            -\bar R^T \hat{\bar x} & 0 & \bar R^T
        \end{bmatrix}. \label{eqn:G_ex}
    \end{align}
    Using these matrices, the covariance can be propagated by \eqref{eqn:dot_P}.
\end{example}

\subsection{Correction with the Left-Invariant Error $f$}\label{sec:cor_L}

Suppose that a measurement $z_{k+1}\in\Re^m$ becomes available at $t_{k+1}$ in the following form:
\begin{align}
    z_{k+1} = Z(g_{k+1}) + w_z, \label{eqn:z}
\end{align}
where $Z:\G\rightarrow\Re^m$, and $w_z\sim\mathcal{N}(0,\Sigma_z)$ is an $m$-dimensional additive Gaussian measurement noise with covariance $\Sigma_z\in\Re^{m\times m}$.

The objective of the correction step is to update the propagated distribution $\mathcal{N}_L(\bar g_{k+1}, P_{k+1})$ to the posterior distribution $\mathcal{N}_L(\bar g_{k+1}^+, P_{k+1}^+)$ conditioned on the measurement $z_{k+1}$.  
As the correction step is performed at the fixed time $t_{k+1}$, we drop the subscript $k+1$ for brevity.  
Although a single measurement vector is considered here, the formulation naturally extends to multiple observations.

Using $g=\bar g f^{-1}$, \eqref{eqn:z} can be rewritten as
\begin{align*}
    z = Z(\bar g f^{-1}) + w_z,
\end{align*}
which expands to
\begin{align}
    z = Z(\bar g) - D Z(\bar g)\cdot(\bar g\vartheta) + w_z + \mathcal{O}(\|\vartheta\|^2). \label{eqn:z_0}
\end{align}
For a given $\bar g$, the first-order expansion $D Z(\bar g)\cdot\vartheta:\g\rightarrow\Re^m$ is a linear operator in $\vartheta$.  
Therefore, there exists $H_L:\G\rightarrow\Re^{m\times n}$ such that
\begin{align}
    -D Z(\bar g)\cdot(\bar g\vartheta) = H_L(\bar g)\vartheta^\vee. \label{eqn:H_L}
\end{align}
Substituting this into \eqref{eqn:z_0} and using the independence of $\vartheta$ and $w_z$, we obtain
\begin{align}
    z - Z(\bar g) \sim \mathcal{N}(0, S_L),
\end{align}
where $S_L\in\Re^{m\times m}$ is given by
\begin{align}
    S_L = H_L(\bar g) P H_L^T(\bar g) + \Sigma_z.
\end{align}
The joint distribution of $\vartheta$ and the measurement residual is
\begin{align*}
    \begin{bmatrix}\vartheta^\vee \\ z - Z(\bar g)\end{bmatrix}
    \sim
    \mathcal{N}\!\left(
        \begin{bmatrix}0\\0\end{bmatrix},
        \begin{bmatrix}
            P & P H_L^T\\
            H_L P & S_L
        \end{bmatrix}
    \right).
\end{align*}

Let $\vartheta^+$ denote the posterior distribution of $\vartheta$ conditioned on $z - Z(\bar g)$.  
By applying the standard conditional Gaussian formula~\cite{tong2012multivariate}, we obtain
\begin{align}
    (\vartheta^+)^\vee = (\vartheta\,|\,z - Z(\bar g))^\vee
    \sim \mathcal{N}\big(L_L(z - Z(\bar g)),\,P^+\big), \label{eqn:vartheta_post}
\end{align}
where $L_L, P^+\in\Re^{n\times n}$ are
\begin{align}
    L_L & = P H_L^T S_L^{-1}, \label{eqn:K_gain}\\
    P^+ & = (I_{n\times n} - L_L H_L)P. \label{eqn:P_post}
\end{align}
These correspond to the standard Kalman gain and covariance update equations.  
The posterior group element $f^+$ is then obtained from $\vartheta^+$ via the exponential map, and the posterior mean is computed as
\begin{align}
    \bar g^+ = \bar g \exp^{-1}\!\big((L_L(z - Z(\bar g)))^\wedge\big). \label{eqn:g_bar_post}
\end{align}
In summary, given the measurement $z$ and the prior estimate $\mathcal{N}_L(\bar g, P)$, the update step yielding $\mathcal{N}_L(\bar g^+, P^+)$ is completed by \eqref{eqn:g_bar_post} and \eqref{eqn:P_post}.

\begin{example}[Left-invariant measurement $z_L$]\label{ex:zL_left}
    Suppose that the measurement $z^L\in\Re^{3+3N_b}$ is given by
    \begin{align}
        z^L =
        \begin{bmatrix} z_x^L\\ z_{b_1}^L \\ z_{b_2}^L \\ \vdots \end{bmatrix}
        =
        \begin{bmatrix} x \\ R b_1 \\ R b_2 \\ \vdots \end{bmatrix}
        + w_z, \label{eqn:z_ex}
    \end{align}
    where each $b_i\in\Re^3$ with $\|b_i\|=1$ for $i\in\{1,\ldots,N_b\}$ represents a unit vector fixed in $\mathcal{B}_2$ and expressed in $\mathcal{B}_1$.  
    In the absence of noise, $z_x^L = x = R_1^T(x_2 - x_1)$ represents the position of $\mathcal{B}_2$ relative to $\mathcal{B}_1$, and $z_{b_i}^L = R b_i$ represents the known direction $b_i$ (in $\mathcal{B}_2$) observed in $\mathcal{B}_1$.  
    Hence, the relative configuration is measured by a set of sensors mounted on $\mathcal{B}_1$.  


    While we can follow the derivation in \Cref{sec:cor_L},
    the particular structure of \eqref{eqn:z_ex} allows an alternative correction formulation using left-invariant measurements~\cite{barrau2015nonlinear}, as described below.  
    Specifically, the measurements in \eqref{eqn:z_ex} can be expressed as a left group action on fixed, known vectors.  
    Define a pseudo-measurement $\tilde z^L\in\Re^{5(N_b+1)}$ as
    \begin{align}
        \tilde z^L = \mathcal{G}(g) B + W_z, \label{eqn:z_ex_0}
    \end{align}
    where $\mathcal{G}(g)\in\Re^{5(N_b+1)\times5(N_b+1)}$ and $B, W_z\in\Re^{5(N_b+1)\times1}$ are
    \begin{gather}
        \mathcal{G}(g) = I_{(N_b+1)\times(N_b+1)} \otimes g,\\
        B = \begin{bmatrix} B_x \\ B_1 \\ \vdots \end{bmatrix}, \quad
        W_z = \Pi^T w_z,
    \end{gather}
    with fixed matrices $B_x, B_i\in\Re^{5\times1}$ and $\Pi\in\Re^{3(N_b+1)\times5(N_b+1)}$ defined as
    \begin{gather*}
        B_x = e_5,\quad
        B_i = \begin{bmatrix} b_i \\ 0 \\ 0 \end{bmatrix},\\
        \Pi = I_{(N_b+1)\times(N_b+1)} \otimes \begin{bmatrix} I_{3\times3} & 0_{3\times2} \end{bmatrix}.
    \end{gather*}
    Essentially, $\tilde z^L$ is obtained by augmenting each block of $z^L$ with two rows of $(0,1)$ or $(0,0)$.
    The original measurement is recovered by the projection
    \begin{align}
        z^L = \Pi \tilde z^L.
    \end{align}

    While this manipulation is simple, the structure of $\tilde z^L$ is useful for linearization.  
    Multiplying \eqref{eqn:z_ex_0} by $\mathcal{G}^{-1}(\bar g)$ and noting that $\bar g^{-1}g = f^{-1}$, we obtain
    \begin{align}
        \mathcal{G}^{-1}(\bar g) \tilde z^L
        = \mathcal{G}(f^{-1}) B + \mathcal{G}(\bar g^{-1}) W_z. \label{eqn:GZL}
    \end{align}
    Expanding the first term gives
    \begin{align*}
        \mathcal{G}(f^{-1}) B
        &\approx B + \tilde H_L \vartheta^\vee,
    \end{align*}
    where $\tilde H_L\in\Re^{5(N_b+1)\times9}$ is
    \begin{align}
        \tilde H_L =
        \begin{bmatrix}
            0 & 0 & -I_{3\times3} \\
            0_{2\times3} & 0_{2\times3} & 0_{2\times3} \\
            \hat b_1 & 0 & 0 \\
            0_{2\times3} & 0_{2\times3} & 0_{2\times3} \\
            \vdots & \vdots & \vdots
        \end{bmatrix}.
    \end{align}
    Note that this expression is independent of the estimated state $\bar g$.  
    Substituting into \eqref{eqn:GZL} and projecting with $\Pi$, we obtain
    \begin{align*}
        \Pi\!\big(\mathcal{G}(\bar g^{-1}) \tilde z^L - B\big)
        &\approx H_L \vartheta^\vee + D(\bar R^{-1}) w_z,
    \end{align*}
    where $H_L = \Pi \tilde H_L \in \Re^{3(N_b+1)\times9}$ and $D(\bar R) = I_{(N_b+1)\times(N_b+1)} \otimes \bar R$.  
    This follows $\mathcal{N}(0, S_L)$, with
    \begin{align}
        S_L = H_L P H_L^T + D(\bar R^{-1}) \Sigma_z D(\bar R). 
    \end{align}
    The corresponding correction step is
    \begin{align}
        \bar g^+ = \bar g \exp^{-1}\!\big((L_L \Pi(\mathcal{G}(\bar g^{-1}) Z^L - B))^\wedge\big), \label{eqn:g_bar_post_ex}
    \end{align}
    with
    \begin{align}
        L_L & = P H_L^T S_L^{-1},\\
        P^+ & = (I_{n\times n} - L_L H_L)P.
    \end{align}
\end{example}

\begin{example}[Right-invariant measurement $z_R$]\label{ex:zR_left}
    Consider another type of measurement $z^R\in\Re^{3+3N_b}$ given by
    \begin{align}
        z^R =
        \begin{bmatrix} z_x^R \\ z_{b_1}^R \\ z_{b_2}^R \\ \vdots \end{bmatrix}
        =
        \begin{bmatrix} -R^T x \\ R^T b_1 \\ R^T b_2 \\ \vdots \end{bmatrix}
        + w_z. \label{eqn:zR_ex}
    \end{align}
    Here, $z_x^R = R_2^T(x_1 - x_2)$ represents the position of $\mathcal{B}_1$ relative to $\mathcal{B}_2$ resolved in $\mathcal{B}_2$, and $z_{b_i}^R = R_2^T R_1 b_i$ denotes the direction $b_i$ fixed in $\mathcal{B}_1$ as measured in $\mathcal{B}_2$.  
    Thus, the relative configuration is measured by a set of sensors attached to $\mathcal{B}_2$.

    Expanding $Z(g)=Z(\bar g f^{-1})$ gives
    \begin{align*}
        Z(g) &\approx
        \begin{bmatrix}
            -\bar R^T \bar x + \vartheta_y + (\bar R^T \bar x)^\wedge \vartheta_Q \\
            \bar R^T b_1 - (\bar R^T b_1)^\wedge \vartheta_Q \\
            \bar R^T b_2 - (\bar R^T b_2)^\wedge \vartheta_Q \\
            \vdots
        \end{bmatrix}
        = Z(\bar g) + H_L(\bar g)\vartheta^\vee,
    \end{align*}
    where $H_L:\SEtt\rightarrow\Re^{(3+3N_b)\times9}$ is
    \begin{align}
        H_L =
        \begin{bmatrix}
            (\bar R^T \bar x)^\wedge & 0 & I_{3\times3} \\
            -(\bar R^T b_1)^\wedge & 0 & 0 \\
            -(\bar R^T b_2)^\wedge & 0 & 0 \\
            \vdots & \vdots & \vdots
        \end{bmatrix}.
    \end{align}
    The correction step follows from \eqref{eqn:K_gain}, \eqref{eqn:P_post}, and \eqref{eqn:g_bar_post}.
\end{example}

\subsection{Propagation with the Right-Invariant Error $h$}\label{sec:prop_R}

The propagation and correction steps presented in \Cref{sec:prop_L,sec:cor_L} constitute an invariant extended Kalman filter for the relative dynamics using the left-invariant error $f$.  
Analogous results can be derived using the right-invariant error $h$, as outlined below.

From the definition of the estimation error $h = \bar g g^{-1}$, we have
\begin{align*}
    g = h^{-1} \bar g,
\end{align*}
where $h$ is expanded as $h = \exp(\varphi) = e + \varphi + \mathcal{O}(\|\varphi\|^2)$ for $\varphi\in\g$.  
It is assumed that $\varphi^\vee \sim \mathcal{N}(0, P)$ for $P\in\Re^{n\times n}$.  
This is denoted by $g \sim \mathcal{N}_R(\bar g, P)$.  
Since
\begin{align*}
    g = h^{-1}\bar g \approx (e - \varphi)\bar g = \bar g(e - \Ad_{\bar g^{-1}}\varphi),
\end{align*}
this is equivalent to
\begin{align*}
    g \sim \mathcal{N}_L\!\left(\bar g,\, \Ad_{\bar g^{-1}} P \Ad^T_{\bar g^{-1}}\right),
\end{align*}
where $\Ad_{\bar g^{-1}}$ is represented by an $n\times n$ matrix as in \eqref{eqn:Ad_g_inv}.

As before, the mean is propagated by \eqref{eqn:dot_g_bar_L_cor}, integrated over $[t_k, t_{k+1}]$ with $\bar g(t_k) = g_k$ to obtain $\bar g_{k+1} = \bar g(t_{k+1})$.  
Next, the covariance is propagated by linearizing \eqref{eqn:dot_h_L} about $h = e$.  
We substitute $h = e + \varphi + \mathcal{O}(\|\varphi\|^2)$ into \eqref{eqn:dot_h_L} and expand it in $\varphi$, while neglecting higher-order terms in $(\varphi, w_1, w_2)$, yielding
\begin{align}
    \dot\varphi = D\tilde X(e)\cdot\varphi + \ad_\varphi \zeta_1 + \hat w_1 - \Ad_{\bar g}\hat w_2. \label{eqn:dot_varphi_0}
\end{align}
Using $A:\g\rightarrow\Re^{n\times n}$ and $G:\G\rightarrow\Re^{n\times n}$ defined in \eqref{eqn:A} and \eqref{eqn:G}, respectively, this becomes
\begin{align}
    \dot\varphi^\vee = A(\zeta_1)\varphi^\vee + w_1 - G(\bar g) w_2.
\end{align}
Accordingly, the covariance evolves as
\begin{align}
    \dot P = A(\zeta_1)P + P A^T(\zeta_1) + \Sigma_1 + G(\bar g)\Sigma_2 G^T(\bar g). \label{eqn:dot_P_h}
\end{align}
In summary, the propagation step is completed by integrating \eqref{eqn:dot_g_bar_L_cor} and \eqref{eqn:dot_P_h} to obtain $\mathcal{N}_R(\bar g_{k+1}, P_{k+1})$.

\begin{example}\label{ex:11}
    Consider the dynamics of the right-invariant estimation error given by \eqref{eqn:dot_q}, \eqref{eqn:dot_p}, and \eqref{eqn:dot_T}.  
    Let $\varphi = (\varphi_T, \varphi_p, \varphi_q)^\wedge \in \sett$ for $\varphi_T, \varphi_p, \varphi_q \in \Re^3$.  
    We have $h = (T, p, q)$ and $\zeta_1 = (\Omega_1, a_1, 0)$.  
    From \eqref{eqn:A_ex},
    \begin{align}
        A(\zeta_1) =
        \begin{bmatrix}
            -\hat\Omega_1 & 0 & 0 \\
            -\hat a_1 & -\hat\Omega_1 & 0 \\
            0 & I_{3\times3} & -\hat\Omega_1
        \end{bmatrix}.
    \end{align}
    Next, with $\bar g = (\bar R, \bar v, \bar x)$ and $w_1 = (-w_1^g, -w_1^a, 0)$,  
    from \eqref{eqn:G_ex} we obtain
    \begin{align}
        G(\bar g) =
        \begin{bmatrix}
            \bar R & 0 & 0 \\
            \hat{\bar v}\bar R & \bar R & 0 \\
            \hat{\bar x}\bar R & 0 & \bar R
        \end{bmatrix}.
    \end{align}
    Using these expressions, the covariance can be propagated according to \eqref{eqn:dot_P_h}.
\end{example}

\subsection{Correction with the Right-Invariant Error $h$}\label{sec:cor_R}

Consider the measurement equation given by \eqref{eqn:z}.  
Using $g = h^{-1}\bar g$, \eqref{eqn:z} can be rewritten as
\begin{align*}
    z = Z(h^{-1}\bar g) + w_z,
\end{align*}
which expands as
\begin{align*}
    z = Z(\bar g) - D Z(\bar g)\cdot(\varphi \bar g) + w_z + \mathcal{O}(\|\varphi\|^2).
\end{align*}
Similar to \eqref{eqn:H_L}, there exists $H_R:\G\rightarrow\Re^{m\times n}$ such that
\begin{align}
    -D Z(\bar g)\cdot(\varphi \bar g) = H_R(\bar g)\varphi^\vee.
\end{align}
Following the same developments as in \Cref{sec:cor_L}, we obtain the posterior estimate
\begin{align}
    \bar g^+ = \exp^{-1}\!\big((L_R(z - Z(\bar g)))^\wedge\big)\,\bar g, \label{eqn:g_bar_post_R}
\end{align}
where
\begin{align}
    S_R & = H_R P H_R^T + \Sigma_z,\\
    L_R & = P H_R^T S_R^{-1},\\
    P^+ & = (I_{n\times n} - L_R H_R)P. \label{eqn:P_post_R}
\end{align}
In summary, given the measurement $z$ and the prior estimate $\mathcal{N}_R(\bar g, P)$,  
the correction step to obtain $\mathcal{N}_R(\bar g^+, P^+)$ is completed by \eqref{eqn:g_bar_post_R} and \eqref{eqn:P_post_R}.  
The propagation and correction steps presented in \Cref{sec:prop_R,sec:cor_R} together constitute an invariant extended Kalman filter for the relative dynamics using the right-invariant error $h$.

\begin{example}[Left-invariant measurement $z_L$]\label{ex:zL_right}
    Consider the left-invariant measurement $z_L$ defined in \eqref{eqn:z_ex}.  
    As in \Cref{ex:9}, let $\varphi = (\varphi_T, \varphi_p, \varphi_q)^\wedge$.  
    We have
    \begin{align*}
        g & = h^{-1}\bar g \approx (e - \varphi)\bar g\\
          & = (\bar R - \hat\varphi_T\bar R,\; \bar v - \hat\varphi_T\bar v - \varphi_p,\; \bar x - \hat\varphi_T\bar x - \varphi_q).
    \end{align*}
    The measurement function expands as
    \begin{align*}
        Z(g) = Z(h^{-1}\bar g)
        \approx
        \begin{bmatrix}
            \bar x - \hat\varphi_T\bar x - \varphi_q \\
            (I_{3\times3} - \hat\varphi_T)\bar R b_1 \\
            (I_{3\times3} - \hat\varphi_T)\bar R b_2 \\
            \vdots
        \end{bmatrix}
        = Z(\bar g) + H_R(\bar g)\varphi^\vee,
    \end{align*}
    where $H_R:\SEtt\rightarrow\Re^{(3+3N_b)\times9}$ is given by
    \begin{align}
        H_R(\bar g) =
        \begin{bmatrix}
            \hat{\bar x} & 0 & -I_{3\times3}\\
            (\bar R b_1)^\wedge & 0 & 0\\
            (\bar R b_2)^\wedge & 0 & 0\\
            \vdots & \vdots & \vdots
        \end{bmatrix}.
    \end{align}
    The correction step is then completed by \eqref{eqn:g_bar_post_R}--\eqref{eqn:P_post_R}.
\end{example}

\begin{example}[Right-invariant measurement $z_R$]\label{ex:zR_right}
    Consider the right-invariant measurement $z_R$ defined in \eqref{eqn:zR_ex}.  
%
    Let $\tilde z_R\in\Re^{5(N_b+1)}$ be a pseudo-measurement obtained by augmenting the first block of $z_R$ with $(0,1)$ and the remaining blocks with $(0,0)$.  
    Following similar developments as in \Cref{ex:zL_left}, we obtain
    \begin{align*}
        \Pi (\mathcal{G}(\bar g)\tilde z^R - B) \sim \mathcal{N}(0, S_R),
    \end{align*}
    where
    \begin{align}
        S_R = H_R P H_R^T + D(\bar R)\Sigma_z D(\bar R^{-1}),
    \end{align}
    with $H_R\in\Re^{3(N_b+1)\times9}$ given by
    \begin{align}
        H_R =
        \begin{bmatrix}
            0 & 0 & I_{3\times3} \\
            -\hat b_1 & 0 & 0 \\
            \vdots & \vdots & \vdots
        \end{bmatrix},
    \end{align}
    and $D(\bar R) = I_{(N_b+1)\times(N_b+1)} \otimes \bar R$.

    The corresponding correction step is
    \begin{align}
        \bar g^+ = \exp^{-1}\!\big((L_R \Pi(\mathcal{G}(\bar g)Z^R - B))^\wedge\big)\,\bar g, \label{eqn:g_bar_post_Rex}
    \end{align}
    where $L_R, P^+\in\Re^{n\times n}$ are
    \begin{align}
        L_R & = P H_R^T S_R^{-1},\\
        P^+ & = (I_{n\times n} - L_R H_R)P.
    \end{align}
\end{example}

As discussed in \Cref{ex:9}, let $\varphi = (\varphi_T, \varphi_p, \varphi_q)^\wedge$.  
The presented stochastic models, $\mathcal{N}_L$ and $\mathcal{
    N}_R$, are consistent with the concentrated Gaussian distributions described in~\cite{wang2006error}, which are based on the exponential coordinates of $\bar g^{-1}g = f^{-1}$ or $g\bar g^{-1} = h^{-1}$.  
While this work focuses on the extended Kalman filter formulation, the approach can be generalized to include second-order terms as in~\cite{wolfe2011bayesian}, or extended to Bayesian estimators that employ global probability distributions~\cite{lee2018bayesian,wang2021matrix}.  
Investigating the role of state trajectory independence in broader classes of estimators beyond the extended Kalman filter remains an interesting direction for future research.

\section{Numerical Example}\label{sec:Num}

In this section, we present numerical results for the running example of the vehicle kinematic model discussed in \Crefrange{ex:rel_noise}{ex:zR_right}.

\subsection{Simulation Parameters}

The relative estimator is simulated with the first vehicle following a circular trajectory and the second vehicle following a Lissajous curve.  
Specifically, the true position trajectories are given by
\begin{align*}
    x_1(t) &= \begin{bmatrix} 1.5 \cos (0.2 t) + 1 \\ 1.5\sin (0.2 t) + 1 \\ 0 \end{bmatrix}, \quad
    x_2(t) = \begin{bmatrix} 2 \sin(0.1 t) \\ 2\sin(0.2 t) \\ 2 \end{bmatrix},
\end{align*}
with the attitude chosen such that the first body-fixed axis is tangent to the position trajectory and the third body-fixed axis points downward.  
The simulation is conducted over a duration of $30$ seconds with a time step of $\Delta t = 0.01$ seconds.

For the process noise, we consider two types of IMUs, denoted by $\alpha$ and $\beta$, whose standard deviations for the gyro and accelerometer noise are
\begin{itemize}
    \item IMU--$\alpha$:
        \begin{align*}
            \sigma_\alpha^g &= \SI{6.108e-5}{rad/s/\sqrt{Hz}}, \\
            \sigma_\alpha^a &= \SI{1.373e-3}{m/s^2/\sqrt{Hz}}.
        \end{align*}
    \item IMU--$\beta$:
        \begin{align*}
            \sigma_\beta^g  &= \SI{1.2218e-3}{rad/s/\sqrt{Hz}}, \\
            \sigma_\beta^a  &= \SI{1.2355e-2}{m/s^2/\sqrt{Hz}}.
        \end{align*}
\end{itemize}
The corresponding covariance matrix for $\alpha$ is defined as
\[
    \Sigma_\alpha = \mathrm{diag}\big((\sigma_\alpha^g)^2 I_{3\times3},\, (\sigma_\alpha^a)^2 I_{3\times3},\, 0_{3\times3}\big) \in \Re^{9\times9},
\]
and $\Sigma_\beta$ is defined analogously.  
Thus, IMU--$\alpha$ is more accurate than IMU--$\beta$, and $\Sigma_\alpha \prec \Sigma_\beta$.

Measurements are available at \SI{2}{\hertz}.  
The position measurements are corrupted by additive Gaussian noise with a standard deviation of $\SI{0.5}{m}$.  
The fixed reference vectors for direction measurements are chosen as $b_1 = [1, 0, 0]^T$, $b_2 = [0, 1, 1]^T$, and $b_3 = [1, 1, 0]^T$, each normalized to unit length.  
The direction measurements are generated by rotating each true direction by a random rotation vector sampled from $\mathcal{N}(0_{3\times1}, \sigma_R^2 I_{3\times3})$ with $\sigma_R = \SI{5}{deg}$.  
The measurement covariance is then numerically computed from these sampled directions.

The initial estimate of the relative state is
\begin{align*}
    \bar g(0) = \big( & R_{12}(0)\exp(\tfrac{\pi}{2}\hat e_3),\, v_{12}(0) + [0.5,\, 0.8,\, -0.4]^T,\\
                      & x_{12}(0) + [0.0,\, 1.0,\, -1.0]^T \big),
\end{align*}
representing an initial error of \SI{90}{\degree} in attitude, \SI{1.02}{\meter/\second} in velocity, and \SI{1.41}{\meter} in position.  
The initial covariance for the filter with left-invariant error is
\[
    P_L(0) = \mathrm{diag}\big((\pi/6)^2 I_{3\times3},\, 2^2 I_{3\times3},\, 2^2 I_{3\times3}\big),
\]
while that for the filter with right-invariant error is obtained as
\[
    P_R(0) = [\Ad_{g(0)^{-1}}]\, P_L(0)\, [\Ad_{g(0)^{-1}}]^T.
\]

\subsection{Simulation Results}

We consider multiple cases depending on the type of estimator, measurement, and IMU.  
For the estimator type, we include the relative invariant filter with left-invariant error (\Cref{sec:prop_L,sec:cor_L}), denoted by LRKF, and the relative invariant filter with right-invariant error (\Cref{sec:prop_R,sec:cor_R}), denoted by RRKF.  
Additionally, a conventional quaternion-based extended Kalman filter (QEKF) is included as a baseline.

Two types of measurements are incorporated.  
The first is the left-invariant measurement $z_L$, presented in \Cref{ex:zL_left} for LRKF and in \Cref{ex:zL_right} for RRKF.  
The second is the right-invariant measurement $z_R$, presented in \Cref{ex:zR_left,ex:zR_right}, respectively for LRKF and RRKF.

Finally, based on the IMU type, we consider the following three cases:
\begin{itemize}
    \item Case I: $(\Sigma_1 = \Sigma_\alpha) \prec (\Sigma_2 = \Sigma_\beta)$,
    \item Case II: $(\Sigma_1 = \Sigma_\alpha) = (\Sigma_2 = \Sigma_\alpha)$,
    \item Case III: $(\Sigma_1 = \Sigma_\beta) \succ (\Sigma_2 = \Sigma_\alpha)$.
\end{itemize}

Hence, there are $3 \times 2 \times 3 = 18$ simulation cases in total.  
For each case, a Monte Carlo simulation with 1000 runs is performed, where all three filters are subjected to identical process and measurement noise for fair comparison.  
At each run, the estimation error is computed as
\begin{align*}
    \mathrm{Error}_k
    &= \|x_k - \bar x_k\| + \|v_k - \bar v_k\| \\
    &\quad + \frac{180}{\pi} \cos^{-1}\!\left(\frac{1}{2}(\mathrm{tr}(R_k^T \bar R_k) - 1)\right),
\end{align*}
where the subscript $k$ denotes the value of a variable at the $k$-th time step.  
This error is averaged over the simulation period, and its ensemble average over the Monte Carlo runs is denoted by $\mathrm{TotalError}$.

\begin{figure}[t]
    \centering
    \subfigure[Left-invariant measurement $z_L$]{%
        \includegraphics[width=0.8\columnwidth]{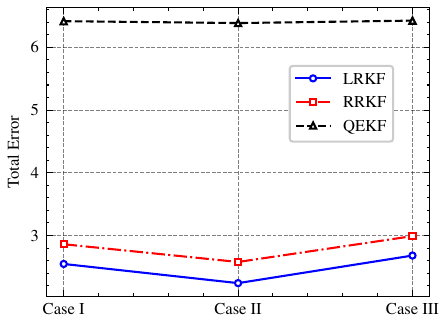}}
    \\
    \subfigure[Right-invariant measurement $z_R$]{%
        \includegraphics[width=0.8\columnwidth]{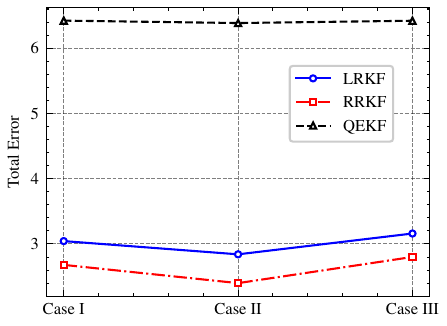}}
    \caption{Simulation results for different filter and measurement combinations.}
    \label{fig:sim}
\end{figure}

The error values for all 18 cases are illustrated in \Cref{fig:sim}, leading to the following observations:
\begin{itemize}
    \item The relative invariant filters (LRKF and RRKF) consistently outperform the quaternion-based extended Kalman filter (QEKF).
    \item Within the relative invariant filters, the correspondence between filter type and measurement type is crucial.  
    Specifically, for the left-invariant measurement $z_L$, LRKF achieves lower errors than RRKF, while for the right-invariant measurement $z_R$, RRKF outperforms LRKF.
    \item These trends hold consistently across all IMU types.  
    Differences in process noise between the two vehicles do not lead to statistically significant variations between LRKF and RRKF performance.  
    However, Case~II exhibits smaller errors for all filters, as the overall process noise is lower.
\end{itemize}

The superior performance observed when the filter and measurement share the same invariance type (i.e., LRKF with $z_L$ and RRKF with $z_R$) can be explained geometrically.  
In these matched cases, the estimation error evolves autonomously on the Lie group, independent of the true state trajectory---a property referred to as \emph{state trajectory independence}.
Consequently, the linearization of the error dynamics remains globally valid and does not degrade with the size of the estimation error.  
In contrast, when the filter and measurement invariances are mismatched, the estimation error dynamics acquire additional state-dependent terms that break this autonomy, leading to increased linearization error and reduced filter consistency.  
This alignment between the error structure and measurement invariance thus underlies the improved numerical performance observed in Fig.~\ref{fig:sim}.

\section{Conclusions}\label{sec:CL}

This paper developed a unified geometric framework for invariant filtering of relative dynamics on matrix Lie groups.  
The main theoretical contribution lies in deriving a new set of equivalent conditions for \emph{state trajectory independence} (STI) by uniquely decomposing the system vector field into components associated with left and right group actions.  
This decomposition provides a clear geometric characterization of when the estimation error dynamics become autonomous—independent of the true trajectory—and thus remain globally valid under large initial errors.

Building on this result, the derived STI conditions were systematically applied to construct invariant extended Kalman filters for relative dynamics.  
Both left- and right-invariant formulations were developed in a unified manner, yielding explicit propagation and correction steps that preserve the group structure while maintaining exact linearization of the error dynamics.  
The analysis further introduced the concept of \emph{relative trajectory independence} (RTI), which formalizes the autonomy of the relative estimation error between two systems.  
It was shown that RTI naturally implies STI for the relative error dynamics, ensuring that relative estimation remains consistent and independent of the individual trajectories.
The theoretical framework was validated through numerical simulations using a vehicle kinematic model.  

Beyond the specific case study, the results highlight the foundational role of trajectory independence in invariant filtering theory.  
The new STI and RTI formulations not only unify previous results on autonomous error dynamics but also provide a principled basis for designing symmetry-preserving estimators on nonlinear configuration spaces.  

Future work includes applying the proposed estimation scheme to the autonomous launch and recovery of unmanned aerial vehicles on ship decks in maritime environments, as well as extending it to formation estimation and control of multi-agent systems.

\bibliographystyle{IEEEtran}
\bibliography{ref_reduced}

\appendix

\subsection{Lie Group $\SEtt$}\label{sec:SEtt}

The group $\SEtt$ is a subgroup of the five-dimensional general linear group defined by
\begin{align*}
    \SEtt = \left\{ \aSEtt{R}{v}{x} \in \mathsf{GL(5)} \,\Bigg|\, R\in\SO3, v,x\in\Re^3 \right\}.
\end{align*}
For brevity, an element of $\SEtt$ is occasionally written as a tuple $g=(R,v,x)$. 
From the matrix multiplication, it is straightforward to show that the group operation is given by $g_1\circ g_2 = (R_1R_2, R_1v_2+v_1, R_1x_2 + x_1)$ for $g_i=(R_i,v_i,x_i)\in\SEtt$ with $i\in\{1,2\}$.
The inverse of $g=(R,v,x)$ is $g^{-1} = (R^T, -R^Tv, -R^Tx)$.

Next, its Lie algebra, $\sett$ is
\begin{align*}
    \sett = \left\{ \asett{\hat\Omega}{A}{V} \in \mathfrak{gl}(5) \,\Bigg|\, \Omega,A,V\in\Re^3 \right\}.
\end{align*}
This definition provides an isomorphism between $\sett$ and $\Re^9$, and accordingly an element of a Lie algebra may be denoted by a tuple $\eta=(\Omega, A, V)$. 
The Lie bracket or the ad operator is given by
\begin{align*}
    \ad_{\eta_1} \eta_2 = [\eta_1, \eta_2] = (\hat\Omega_1\Omega_2, \hat\Omega_1 A_2-\hat\Omega_2A_1, \hat\Omega_1V_2-\hat\Omega_2 V_1),
\end{align*}
for $\eta_i=(\Omega_i, A_i, V_i)$ with $i\in\{1,2\}$. 
Or in a matrix form, 
\begin{align*}
    \ad_{\eta_1} \eta_2 = \begin{bmatrix} \hat\Omega_1 & 0  & 0 \\ \hat A_1 & \hat\Omega_1  & 0 \\ \hat V_1 & 0 & \hat\Omega_1 \end{bmatrix}
    \begin{bmatrix} \Omega_2 \\ A_2 \\ V_2 \end{bmatrix}.
\end{align*}

The kinematics equation can be written as $\dot g = g\eta = \zeta g$, or equivalently
\begin{align*}
    (\dot R, \dot v, \dot x) = (R\hat\Omega, RA, RV) = (\hat\omega R, \hat \omega v + a, \hat\omega x + u)
\end{align*}
for $g=(R,v,x)\in\SEtt$ and $\eta=(\Omega,A,V), \zeta = (\omega, a, u) \in\sett$. 


The adjoint map $\Ad_g:\sett\rightarrow\sett$ is 
\begin{align}
    \Ad_g \eta & = g\eta g^{-1} 
                 = (R\Omega,  -\widehat{R\Omega} v + RA, -\widehat{R\Omega} x + RV) \nonumber \\
               & = \begin{bmatrix} 
                       R & 0 & 0 \\
                       \hat v R & R & 0 \\
                       \hat x R & 0 & R
                   \end{bmatrix} 
                   \begin{bmatrix} \Omega \\ A \\ V \end{bmatrix},\label{eqn:Ad_g} \\
    \Ad_{g^{-1}} \zeta & = g^{-1}\zeta g 
                         = (R^T\omega,  R^T(\hat\omega v + a), R^T(\hat\omega x + u)) \nonumber \\
                       & = \begin{bmatrix} 
                           R^T & 0 & 0 \\
                           -R^T\hat v & R^T & 0 \\
                           -R^T\hat x & 0 & R^T 
                       \end{bmatrix}
                       \begin{bmatrix} \omega \\ a \\ u \end{bmatrix}.\label{eqn:Ad_g_inv}
\end{align}

\begin{IEEEbiography}[{\includegraphics[width=1in,height =1.25in,clip,keepaspectratio]{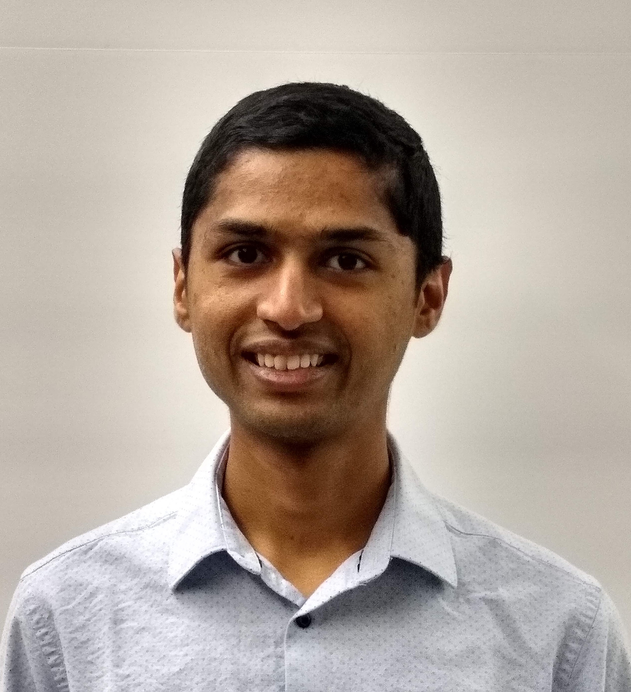}}]
    {Tejaswi K. C.} received the B.Tech. and M.Tech. degree in Aerospace Engineering from IIT Bombay, India, in 2019, and the Ph.D. in Mechanical and Aerospace Engineering at The George Washington University in 2024.
    He is currently a postdoctoral associate in Washington, D.C., U.S.A.
    His recent research consists of theoretical and computational methods in dynamics, optimization, data-driven methods and bio-inspired systems. 
\end{IEEEbiography}

\begin{IEEEbiography}[{\includegraphics[width=1in,height =1.25in,clip,keepaspectratio]{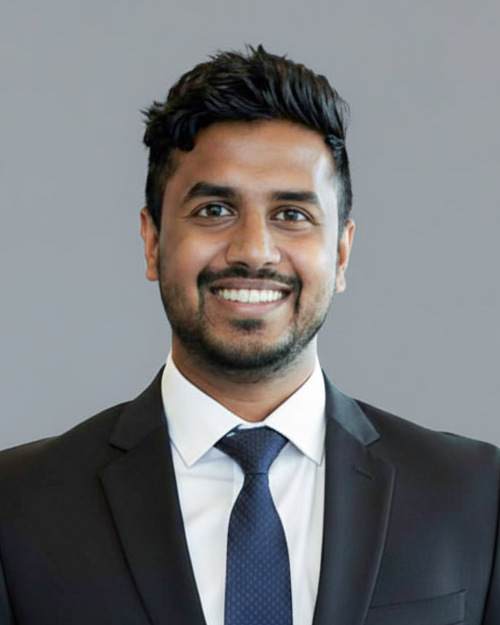}}]
	{Maneesha Wickramasuriya}
	is currently a Ph.D. candidate in mechanical and aerospace engineering at the George Washington University, Washington, D.C., USA. He received the B.Sc. degree in mechanical engineering from the University of Peradeniya, Sri Lanka, in 2017. His research interests include geometric control of unmanned aerial vehicles, deep computer vision–based navigation, and multi-sensor fusion for autonomous systems.
\end{IEEEbiography}. 

\begin{IEEEbiography}[{\includegraphics[width=1in,height =1.25in,clip,keepaspectratio]{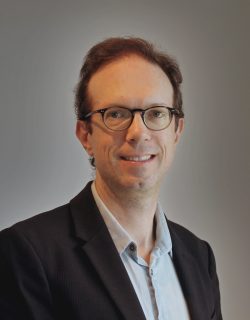}}]
    {Silv\`ere Bonnabel} (member, IEEE) received the engineering and Ph.D. degree from Mines Paris PSL, in 2004 and 2007. He is currently Professor at Mines Paris PSL. In 2017, he was Invited Fellow at the University of Cambridge. He serves as an Associate Editor of {IEEE Control Systems Magazine}. He is a recipient of various awards, including the 2015 IEEE SEE Glavieux Prize, the 2021  European Control Award, the 2022 Espoir Award from IMT-Academie des Sciences.
\end{IEEEbiography}

\vspace*{-2\baselineskip}

\begin{IEEEbiography}[{\includegraphics[width=1in,height =1.25in,clip,keepaspectratio]{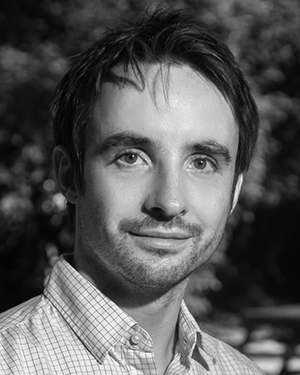}}]
    {Axel Barrau} received the engineering degree in applied mathematics from Ecole Polytechnique, Palaiseau, France, in 2013, and the Ph.D. degree in computer science from Ecole des Mines Paris  in 2015. He is currently CTO at OFFROAD.  Dr. Barrau was the recipient of the Automatica Paper Prize in 2020 and the 2024 George N. Saridis Best Transactions Paper Award for Outstanding Research from IEEE Transactions on Intelligent Vehicles. 
\end{IEEEbiography}

\begin{IEEEbiography}[{\includegraphics[width=1in,height =1.25in,clip,keepaspectratio]{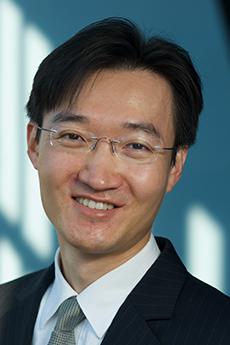}}]
    {Taeyoung Lee}
    		(Senior Member, IEEE) received the Ph.D. degree in aerospace engineering and the M.S. degree in mathematics from the University of Michigan, Ann Arbor, MI, USA, in 2008.
		He is a professor of the Department of Mechanical and Aerospace Engineering at the George Washington University, Washington, D.C., USA.
		His research interests include geometric mechanics and control with applications to complex aerospace systems.
\end{IEEEbiography}

\end{document}